\newtheorem{lemma}{Lemma}[section]
\newtheorem{definition}{Definition}[section]
\newtheorem{theorem}{Theorem}[section]
\newtheorem{proposition}{Proposition}[section]
\newcommand{\given}{\mid}
\newcommand{\prob}[1]{\mathbf{Pr}\big[#1\big]}
\newcommand{\expect}[1]{\mathbf{E}\big[#1\big]}
\newcommand{\budget}{B}
\newcommand{\val}{v}
\newcommand{\vals}{{\mathbf \val}}
\newcommand{\valsmi}[1][i]{{\mathbf \val}_{-#1}}
\newcommand{\vali}[1][i]{{\val_{#1}}}
\newcommand{\alloc}{x}
\newcommand{\allocs}{{\mathbf\alloc}}
\newcommand{\alloci}[1][i]{{\alloc_{#1}}}
\newcommand{\payment}{p}
\newcommand{\payments}{{\mathbf\payment}}
\newcommand{\paymenti}[1][i]{\payment_{#1}}
\DeclareMathOperator{\EFO}{EFO}
\newcommand{\super}[1]{^{(#1)}}
\newcommand{\feasibles}{{\cal X}}
\newcommand{\mech}{{\cal M}}
\newcommand{\clock}{p}
\newcommand{\supply}{W}
\newcommand{\supplyi}[1][i]{\supply_{#1}}
\newcommand{\cumalloc}{X}
\newcommand{\cumalloci}[1][i]{\cumalloc_{#1}}
\newcommand{\cumfeas}{F}
\newcommand{\cumfeasi}[1][i]{\cumfeas_{#1}}
\newcommand{\budgeti}[1][i]{\budget_{#1}}
\newcommand{\pos}{w}
\newcommand{\posi}[1][i]{\pos_{#1}}
\newcommand{\poss}{(\posi[1],\ldots,\posi[n])}
\newcommand{\ipos}{\bar{\pos}}
\newcommand{\iposi}[1][i]{\ipos_{#1}}
\newcommand{\magent}{\kappa}
\newcommand{\marginal}{\delta}
\DeclareMathOperator{\Clinching}{Clinching}
\DeclareMathOperator{\EFOW}{EFO^W}
\newcommand{\coin}{p} 
\newcommand{\ruin}{r} 
\newcommand{\market}{M}
\newcommand{\sample}{S}
\newcommand{\mals}{\vals^{\market}}
\newcommand{\sals}{\vals^{\sample}}
\newcommand{\eval}{\tilde{\val}}
\newcommand{\evals}{\tilde{\mathbf \val}}
\newcommand{\evali}[1][i]{{\eval_{#1}}}
\newcommand{\ealloc}{\tilde{\alloc}}
\newcommand{\eallocs}{\tilde{\mathbf\alloc}}
\newcommand{\ealloci}[1][i]{{\ealloc_{#1}}}
\newcommand{\domind}{\eta}
\newcommand{\domi}[1][i]{\domind={#1}}
\newcommand{\nobudgetapprox}{7.5}
\newcommand{\budgetapprox}{10.0}
\DeclareMathOperator{\EFOR}{EFO^R}
\DeclareMathOperator{\PEoperator}{PE}
\newcommand{\PE}[1]{\PEoperator^{#1}}
\DeclareMathOperator{\BSPEoperator}{BSPE}
\newcommand{\BSPE}[1]{\BSPEoperator_{#1}}
\DeclareMathOperator{\PERoperator}{PER}
\newcommand{\PER}[1]{\PERoperator^{#1}}
\newcommand{\lag}{\lambda} 
\newcommand{\lvv}{\phi^\lag}
\newcommand{\lvvi}[1][i]{{\lvv_{#1}}}
\newcommand{\ilvvi}[1][i]{{\bar{\phi}_{#1}}}
\newcommand{\lrev}{R^\lag}
\newcommand{\ilrev}{\bar{R}^\lag}
\newcommand{\STOC}[1]{}
\newcommand{\NOTSTOC}[1]{#1}
\begin{document}

\begin{titlepage}
\title{\Large Prior-free Auctions for Budgeted Agents}
\author{
  Nikhil R. Devanur%
  \thanks{Microsoft Research, Redmond, WA.
  Email: {\texttt nikdev@microsoft.com}}\\
\and
  Bach Q. Ha%
  \thanks{Department of Electrical Engineering and Computer Science,
  Northwestern University, Evanston, IL.
  Email: {\texttt \{bach,hartline\}@u.northwestern.edu}.}\\
\and
  Jason D. Hartline\footnotemark[2]
}
\date{}
\maketitle

\begin{abstract} 
We consider prior-free auctions for revenue and welfare maximization
when agents have a common budget.  The abstract environments we
consider are ones where there is a downward-closed and symmetric
feasibility constraint on the probabilities of service of the agents.
These environments include {\em position auctions} where slots with
decreasing click-through rates are auctioned to advertisers.  We
generalize and characterize the envy-free benchmark from \citet{HY-11}
to settings with budgets and characterize the optimal envy-free
outcomes for both welfare and revenue.  We give prior-free mechanisms
that approximate these benchmarks.  A building block in our mechanism
is a clinching auction for position auction environments.  This
auction is a generalization of the multi-unit clinching auction of
\citet{DLN-08} and a special case of the polyhedral clinching auction
of \citet{GMP-12}.  For welfare maximization, we show that this
clinching auction is a good approximation to the envy-free optimal
welfare for position auction environments.  For profit maximization,
we generalize the random sampling profit extraction auction from
\citet{FGHK-02} for digital goods to give a
$\budgetapprox$-approximation to the envy-free optimal revenue in
symmetric, downward-closed environments.  The profit maximization
question is of interest even without budgets and our mechanism is a
$\nobudgetapprox$-approximation which improving on the 30.4 bound of
\citet{HH-12}.
\end{abstract}

\thispagestyle{empty}

\end{titlepage}
\newpage
\setcounter{page}{1}

\section{Introduction}

%
%
Economic mechanisms that are less dependent on the assumptions of the
environment are more likely to be relevant \citep{W-87}.  The area of
{\em prior-free mechanism design} attempts to give mechanisms that
guarantee a good approximation to the designer's objective without
dependence on distributional assumptions on the agents' preferences.
The main open questions in prior-free mechanism design center around
the departure from the ideal single-dimensional and linear model of
preferences, i.e., where an agent's utility is given by her value for
service less the payment she is required to make.  A paradigmatic
example of a non-linear agent utility is one that is linear up to a
given budget that restricts the agent's maximum possible payment.  In this
paper we consider the designer's objectives of revenue and welfare
(separately) when agents have budgets, and we give simple prior-free
auctions that approximate a natural prior-free benchmark (for each
objective).

%
%
Mechanism design studies optimization on inputs that are the private
information of strategic agents who may misreport their information if
it benefits them.  Agents will not misreport only if they have no
incentive to, i.e., if their utilities are maximized by truthful
reporting.  The key challenge in designing mechanisms for strategic
agents, then, is that incentive constraints bind across possible agent
misreports.  A mechanism must therefore trade-off performance on one
input versus another.  For general objectives, e.g., welfare with
budgeted agents, profit, or makespan (for unrelated machines), there
is no single mechanism that is simultaneously optimal on all
inputs.\footnote{For these objectives, the designer's objective and
  the agents' objectives are fundamentally at odds and the incentive
  constraints of the agents do not permit the designer to obtain the
  same performance possible when the inputs are public.  These
  objectives contrast starkly to the objective of welfare maximization
  without budgets where there is no conflict in the designer's
  objective and the agents' objectives and the Vickrey-Clarke-Groves
  (VCG) mechanism is pointwise optimal \citep{Vic-61,Cla-71,Gro-73}.}
There are two approaches for addressing the non-pointwise-optimality
of mechanisms. The Bayesian approach, which is standard in economics,
assumes that the agents' preferences (inputs) are drawn from a known
distribution and the performance of the mechanism across different
inputs can be traded off so as to optimize its expected performance
with respect to this given distribution.  The Bayesian optimal mechanism,
therefore, depends on the distribution.  The prior-free approach,
which is currently being developed in computer science, instead looks
for a single mechanism that approximates an economically motivated
prior-free benchmark in worst-case over all inputs.

%

%
%
The first step in developing prior-free mechanisms is to identify an
appropriate prior-free benchmark.  \citet{HY-11} recently observed
that a simple and intuitive prior-free benchmark can be defined based
on a relaxation of the no-misreporting incentive constraint to a
no-envy constraint.  The advantage of the no-envy constraint is that
it binds pointwise on each input instead of across all inputs like
incentive constraints; therefore, there is always a pointwise optimal
envy-free outcome.  Furthermore, as \citet{HY-11} pointed out, often
this benchmark is an upper bound on the optimal performance on the
Bayesian optimal mechanism for any distribution; in these cases
approximating it pointwise gives a very strong performance guarantee.
Our first contribution is a generalization of the revenue-optimal
envy-free benchmark without budgets to the objectives of revenue and
welfare with budgets.

%
%
A mechanism must optimize its objective subject to incentive
constraints (discussed above), feasibility constraints (i.e.,
constraints on how agents can be served together), and budget
constraints.  It is most instructive to classify feasibility
constraints in terms of the sophistication required of constrained
optimization of a weighted sum of the set of agents served (or, for
randomized environments, probabilities of service).  An environment,
like that of digital good auctions, may be {\em unconstrained}.  An
{\em ordinal environment}, like those of position auctions (as
popularized by advertising on Internet search engines), is one where
the optimal algorithm is greedy on agents ordered by weight.  In a
general {\em cardinal environment}, like those of single-minded
combinatorial auctions, the weights of the agents are necessary for
optimization.  An environment is {\em symmetric} if the feasibility
constraint respects all permutations of the agent identities.  While
feasibility constraints limit which agents are served, budget
constraints limit the prices that agents pay.

Recent results of \citet{DLN-08} and \citet{GMP-12} have shown that a
generalization of the \citet{A-04} {\em clinching auction} is the only
{\em Pareto optimal} mechanism in ordinal environments.  At a
high-level, the clinching auction is given by an ascending price at
which each agent is allowed to claim any of the supply that would be
left over if that agent were given the last choice.  Pareto optimality
is the condition that there is no other feasible outcome where some
participant (including the designer) can be made strictly better off
without making some participant strictly worse off.  Pareto optimality
is a condition not an objective which means that it is not clear what
an approximation to Pareto optimality would be.  It is also not true
that all reasonable auctions must satisfy Pareto optimality.  The
Bayesian welfare-optimal auction for budgeted agents is not generally
Pareto optimal (see Sections~\ref{s:benchmarks} and~\ref{s:welfare});
and moreover, there does not generally exist Pareto-optimal auctions
for budgeted agents in cardinal environments \citep{GMP-12}.

%
%
Our first goal, given the limits of Pareto optimality, is to identify
a prior-free auction that approximates the envy-free optimal welfare
when agents have budgets.  The outcome of the clinching auction (for
ordinal environments) is envy free; however, it is not the
welfare-optimal envy-free outcome.  Moreover, given distribution over
agent values, the clinching auction is not the Bayesian optimal
auction for welfare either.  We give a simple closed form expression
for the clinching auction in symmetric ordinal environments with a
common budget and we show that it is a 2-approximation to the
envy-free benchmark.\footnote{While both \citet{DLN-08} and
  \citet{GMP-12} allow agents to have distinct budgets, the envy-free
  benchmark is only economically well motivated in symmetric
  environments therefore a common budget is required.  Our restriction
  to symmetric environments and in particular a common budget is
  reasonable as designing prior-free auctions for asymmetric
  environments is a challenge in itself even without budgets; for
  asymmetric environments only a few positive results are known, see,
  e.g., \citet{BBHM-08} and \citet{LR-12}, both of which are for
  unconstrained environments (e.g., digital goods).}


%
%
Our second goal is to identify a prior-free auction that approximates
the envy-free optimal revenue when agents have budgets.  For revenue
maximization without budgets \citet{HY-11} and \citet{HH-12} recently
gave general approaches for approximating the optimal envy-free
revenue.  The former extends a standard random sampling approach (for
digital good auctions) from \citet{GHW-01}; the latter extends an
approach based on ``consensus estimates'' and ``profit extraction''
from \citet{GH-03}.  Our approach is based on an extension of the {\em
  random sampling profit extraction} auction from
\citet{FGHK-02}.\footnote{Therefore all of the leading approaches for
  digital good auctions extend to more general environments. } Not
only is our mechanism the only one that is readily compatible with
budget constraints, but also the approximation factors we obtain,
relative to the revenue-optimal envy-free benchmark, are the best
known.  We show a $\budgetapprox$-approximation to the envy-free
optimal revenue in symmetric cardinal environments.  Moreover, without
budgets without budgets, our techniques give a
$\nobudgetapprox$-approximation which imptoves on the 30.4
approximation of \citet{HH-12}.

\paragraph{Summary of Results}

Our main conceptual contribution is the adaptation of the prior-free
mechanism design framework initiated by the literature on digital
goods (i.e., unconstrained environments), e.g., \citet{GHW-01} to the
structurally rich environments of \citet{HY-11} (including symmetric
ordinal environment and cardinal environments)\footnote{\citet{HY-11}
  refer to symmetric ordinal environments equivalently as {\em
    position environments} and {\em matroid permutation environments}
  and to symmetric cardinal environments as {\em downward-closed
    permutation environments}.} when agents' preferences are
non-linear as given by a common budget constraint.  Our technical
results are as follows:
\begin{itemize}
\item We give a characterization of the envy-free benchmark for
  welfare and revenue when agents have a common budget.  This
  characterization is via an extension of the characterization of
  Bayesian optimal auctions for agents with budgets of \citet{LR-96}
  to general distributions.\footnote{The \citet{LR-96}
    characterization holds for monotone hazard rate distributions with
    increasing density; under such assumptions, many of the novel
    properties of optimal auctions with budgets do not arise.}
\item We give a closed-form characterization of the clinching auction
  of \citet{GMP-12} in symmetric ordinal environments with a common
  budget.
\item We prove that the clinching auction is a 2-approximation to the
  envy-free optimal welfare in symmetric ordinal environments with a
  common budget.
\item We extend the random sampling profit extraction auction from
  \citet{FGHK-02} to symmetric cardinal environments with a common
  budget.  This generalization gives a $\budgetapprox$-approximation
  to the envy-free benchmark.  This is the first prior-free
  approximation of an economically well motivated benchmark for agents
  with budgets and it also improves (to $\nobudgetapprox$ from 30.4)
  on the best known prior-free approximation without budgets.
\item The clinching auction is not well defined in cardinal
  environments; the above auction converts the symmetric cardinal
  environment to symmetric ordinal environment where the clinching
  auction can be run and its objective is close to optimal (for the
  original cardinal environment).
\end{itemize}

\paragraph{Related Work}

%
%
The theory of Bayesian optimal auctions for welfare or revenue when
agents have budgets (a form of non-linear utility) is more complex
than that of revenue when agents have linear utility.  In the latter,
e.g., the revenue-optimal mechanism is given by optimizing {\em
  virtual values} which are given by a simple distribution-dependent
function of agents' values \citep{M-81}.  In the former, under some
restrictive distributional assumptions, a similar Lagrangian virtual
value approach gives the optimal mechanism (subject to careful choice
of the Lagrangian variable, see \citealp{LR-96}).

%
%
\citet{HY-11} defined the envy-free benchmark as a relaxation of the
Bayesian optimal auction that can be optimized pointwise.  Our
characterization of the envy-free benchmark for welfare and revenue
for agents with budgets combines and extends the results of
\citet{LR-96} and \citet{HY-11}.

%
%
There are three main techniques for designing revenue maximizing
prior-free auctions for digital goods (i.e., where there is no
feasibility constraint).  The {\em random sampling optimal price}
auction was defined by \citet{GHW-01}.  The {\em consensus estimate
  profit extraction} auction was defined by \citet{GH-03}.  The {\em
  random sampling profit extraction} auction was defined by
\citet{FGHK-02}.  The first two approaches were generalized to
symmetric cardinal environments by \citet{HY-11} and \citet{HH-12},
respectively.  We generalize the third approach to these environments.
Our generalization gives the best known approximation factor (to the
envy-free benchmark) without budgets (of \nobudgetapprox) and the
first approximation with budgets (of \budgetapprox).

%
%
Our mechanisms are based on the clinching auction of \citet{A-04}
generalized to multi-unit environments (a special case of ordinal
environments) with budgets by \citet{DLN-08} and ordinal environments
by \citet{GMP-12}.  There are two dimensions on which we can compare
our results to these prior studies of the clinching auction, (a)
whether agents have a common budget or distinct budgets and (b) the
feasibility constraint of the designer.  With respect to (a), our
results are weaker as we require a common budget, with respect to (b)
our symmetric ordinal environment is between multi-unit environments
and the general ordinal environments where the latter allows for
asymmetry.  The advantage of our restriction to environments that are
symmetric in budget (a) and feasibility (b), is that we are able to
derive a closed-form formula for the outcome of the clinching auction.
Finally, \citet{GMP-12} show that the clinching auction does not
generally extend to arbitrary cardinal environments; however, we show
that any symmetric cardinal environment contain a symmetric ordinal
environment for which the clinching auction performs well (with
respect to the objective on the original cardinal environment).
Moreover, we can effectively find this ordinal environment and run the
clinching auction on it without compromising the agent incentives.

\paragraph{Organization}

We give formal definitions of the model in Section~\ref{s:prelim}.  In
Section~\ref{s:benchmarks} we characterize the envy-free benchmarks
for agents with budgets.  In Section~\ref{s:welfare} we characterize
the clinching auction in position environments and show that it is a
2-approximation to the envy-free optimal welfare.  In
Section~\ref{s:revenue} we define the biased sampling profit
extraction auction and prove that it is a
$\budgetapprox$-approximation to the envy-free optimal revenue when
the agents have a common budget.  When the agents do not have a budget
constraint, the auction can be improved to a
$\nobudgetapprox$-approximation and this improvement is given in
Appendix~\ref{s:nobudget}.

\section{Preliminaries}
\label{s:prelim}

\paragraph{Incentives}
We study auction problems for $n$ single-dimensional agents with a
common budget.  Each agent $i$ has a value $\vali$ for the service.  A
mechanism maps reported values $\vals = (\vali[1],\ldots,\vali[n])$ to
a probability that agent $i$ wins, $\alloci(\vals)$, and a payment
$\paymenti(\vals)$.\footnote{For clarity we will equate randomized
  mechanisms with deterministic mechanisms outputting fractional
  assignments and deterministic payments (both equal to their
  expectations).}  The agents are financially constrained by a budget
$\budget$ but otherwise are risk neutral.  Agent $i$'s utility from
the mechanism on reports $\vals$ is $\vali \alloci(\vals) -
\paymenti(\vals)$ if $\paymenti(\vals) \leq \budget$ and negative
infinity otherwise.

A mechanism is {\em budget respecting (BR)} if no agent pays more than
the budget on any valuation profile, i.e., for all $i$ and $\vals$,
$\paymenti(\vals) \leq \budget$.  A mechanism is {\em individually
  rational (IR)} if a risk-neutral agent weakly prefers to participate
in the mechanism than not.
$\forall i, \vals, \vali \alloci(\vals) - \paymenti (\vals) \geq 0.$
We say that a mechanism is {\em incentive compatible (IC)} if a
risk-neutral agent maximizes her utility by bidding her true value.
I.e., $\forall i, \vals, z,\  \vali \alloci(\vals) - \paymenti (\vals) \geq\vali \alloci(z, \valsmi) - \paymenti (z,\valsmi) .$
\cite{M-81} characterized incentive compatible mechanisms for single
dimensional agents (without budgets) as follows.
\begin{theorem}[\citealp{M-81}]\label{thm.myerson}
A mechanism is incentive compatible if and only if the allocation is
monotone non-decreasing in the reported values, i.e., for all $i$,
$\alloci(z,\valsmi)$ is monotone non-decreasing in $z$ and the
expected payments satisfy $\paymenti(z,\valsmi) = \vali
\alloci(z,\valsmi) - \int_0^{\vali} \alloci(z,\valsmi) dz.$
\end{theorem}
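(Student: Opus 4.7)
The plan is to fix an agent $i$ and others' reports $\valsmi$, and work with the single-variable functions $z \mapsto \alloci(z, \valsmi)$ and $z \mapsto \paymenti(z, \valsmi)$, proving each direction of the equivalence in turn. I read the right-hand side of the payment identity as $\vali \alloci(\vali,\valsmi) - \int_0^{\vali} \alloci(t,\valsmi)\,dt$, with $t$ the dummy integration variable.

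For necessity (IC implies monotonicity and the payment formula), I first invoke IC between two arbitrary types $v < v'$, obtaining $v\,\alloci(v,\valsmi) - \paymenti(v,\valsmi) \geq v\,\alloci(v',\valsmi) - \paymenti(v',\valsmi)$ together with the analogous inequality with the roles of $v$ and $v'$ swapped. Adding them yields $(v'-v)\bigl(\alloci(v',\valsmi) - \alloci(v,\valsmi)\bigr) \geq 0$, establishing monotonicity. Next I define the truthful utility $U(v) := v\,\alloci(v,\valsmi) - \paymenti(v,\valsmi)$ and rewrite IC as $U(v) = \sup_z \bigl(v\,\alloci(z,\valsmi) - \paymenti(z,\valsmi)\bigr)$, a pointwise supremum of affine functions of $v$, hence convex. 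Rearranging the same two IC inequalities gives the sandwich $(v-z)\,\alloci(v,\valsmi) \geq U(v) - U(z) \geq (v-z)\,\alloci(z,\valsmi)$ for $v \geq z$, so $U$ is Lipschitz on bounded intervals with derivative $\alloci(v,\valsmi)$ at every continuity point of $\alloci(\cdot,\valsmi)$ — which, by monotonicity, is almost every point. The fundamental theorem of calculus for absolutely continuous functions then gives $U(v) = U(0) + \int_0^v \alloci(t,\valsmi)\,dt$, and with the normalization $U(0) = 0$ (i.e., $\paymenti(0,\valsmi) = 0$, implicit in the stated identity and consistent with IR), rearranging produces the claimed payment formula.

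For sufficiency, I assume the allocation is monotone and the payment satisfies the formula, and compare an agent's utility from a misreport $z$ to her utility from truthful reporting at value $v$. Substituting the payment formula, the misreport utility simplifies to $(v-z)\,\alloci(z,\valsmi) + \int_0^z \alloci(t,\valsmi)\,dt$ and the truthful utility to $\int_0^v \alloci(t,\valsmi)\,dt$, so their difference reduces to $\int_z^v \bigl(\alloci(t,\valsmi) - \alloci(z,\valsmi)\bigr)\,dt$. Monotonicity makes the integrand non-negative on $[z,v]$ when $v > z$ and non-positive on $[v,z]$ when $v < z$, so the signed integral is non-negative in either case, and truthful bidding maximizes utility.

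The main obstacle is the analytic step in the necessity direction: passing from the pointwise sandwich on $U(v)-U(z)$ to the exact integral representation of $U$ (and thus of $\paymenti$) requires knowing that $\alloci(\cdot,\valsmi)$ is Lebesgue-integrable on bounded intervals (which follows from monotonicity and boundedness in $[0,1]$) and that $U$ is absolutely continuous so that the fundamental theorem of calculus applies. Once that machinery is in place, the rest of the argument — both the monotonicity derivation and the sufficiency side — is algebraic manipulation anchored on the pairwise IC inequalities and the monotonicity of $\alloci$.
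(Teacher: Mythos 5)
The paper states this as a cited classical result of \citet{M-81} and provides no proof of its own, so there is nothing to compare against directly; your proof is correct and follows the standard argument for Myerson's lemma. You correctly disambiguate the overloaded use of $z$ in the statement (reading it as $\paymenti(\vali,\valsmi)=\vali\alloci(\vali,\valsmi)-\int_0^{\vali}\alloci(t,\valsmi)\,dt$ with the normalization $\paymenti(0,\valsmi)=0$), the monotonicity derivation from the two crossed IC inequalities is right, the convexity-plus-envelope argument for the payment identity is the standard one (and you are right that Lipschitz continuity of $U$, inherited from $\alloci\in[0,1]$, is what licenses the fundamental theorem of calculus), and the sufficiency direction reduces cleanly to the non-negativity of $\int_z^v(\alloci(t,\valsmi)-\alloci(z,\valsmi))\,dt$ in both orderings of $z$ and $v$.
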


\paragraph{Feasibility}
As described above, an auction produces a randomized outcome for each
agent with probabilities denoted by $\allocs =
(\alloci[1],\ldots,\alloci[n])$.  We assume there is a feasibility
constraint that governs the set of such allocations that can be
produced.  We denote the space of feasible allocations by $\feasibles
\subset [0,1]^n$.  Our only requirement on this space is that it is
symmetric, convex, and downward-closed.\footnote{Our envy-free
  benchmark only makes sense in symmetric environments, mechanism
  design spaces are always convex if randomization is allowed, and
  downward closure says that if $\allocs \in \feasibles$ then
  $\allocs_{-i} \in \feasibles$ where $\allocs_{-i} =
  (\alloci[1],\ldots,\alloci[i-1],0,\alloci[i+1],\ldots,\alloci[n])$.}
Moreover, all we need from our feasibility constraint is that there is
an algorithm that (approximately) optimizes a linear sum of weights of
the agents served subject to it (and that any agent served can be
instead rejected); in these cases we instead view $\feasibles$ as the
induced allocation of the algorithm.

Given this algorithmic view, we partition the classes of feasibility
constraints by the kinds of algorithms that work.  If ``greedy by
weight'' is optimal then we refer to the feasibility constraint as
{\em ordinal} as only the order of the weights matters and not the
actual cardinal weights.  We refer to the more general case as {\em
  cardinal}.  Importantly the ordinal, symmetric case is identical to
the position auction environment under common study.  A {\em position
  environment} is given by a decreasing sequence of position weights
$\posi[1]\geq \cdots \geq \posi[n]$ and each agent can be matched to
at most one position.  The cardinal, symmetric case includes problems
considered in the literature such as the (symmetric restriction) of
the {\em polyhedral environments} of \citet{GMP-12} and the {\em
  downward-closed permutation environments} of \citet{HY-11}.



\paragraph{Envy-free Benchmarks}

The goal of prior-free mechanism design is to give a mechanism with
and a performance guarantee that holds point-wise, i.e., in worst
case, on valuation profiles.  Such a prior-free guarantee requires
comparison to a prior-free benchmark which is also defined point-wise
on valuation profiles.  Prior-free benchmarks that do not take into
account the incentive constraints of the mechanism design problem are
often inapproximable, but considering incentive constraints is
non-trivial because incentive constraints bind on possible agent
misreports and not point-wise on the valuation profile.  

\citet{HY-11} recently demonstrated that {\em envy-freedom} constrants
are a reasonable point-wise relaxation of incentive constraints.
Formally, an outcome $(\allocs, \payments)$ is envy free for valuation
profile $\vals$ if for all $i$ and $j$, agent $i$ does not prefer to
swap allocation and payment with agent $j$, i.e., $\vali \alloci -
\paymenti \geq \vali \alloci[j] - \paymenti[j]$.  The envy-free
benchmark (with budgets) is defined by optimizing over all envy-free
outcomes (that are budget respecting).  The following lemma
characterizes envy-free outcomes for valuation profiles $\vals$ that
are, without loss of generality, indexed by value, i.e., $\vali$'s are
monotonically non-decreasing in $i$.

%

\begin{lemma}[\citealp{HY-11}]
\label{l:envyfree}
Allocation $\allocs$ has prices for which it is envy free if and only
if it is swap monotone, i.e., $\alloci \geq \alloci[i+1]$.  The
minimum and maximum payments for which such an $\allocs$ is envy-free
are are
$\paymenti^{\min}=\sum_{j=i+1}^{n}(\alloci[j-1]-\alloci[j])\vali[j]$
and $\paymenti^{\max}=\sum_{j=i}^n(\alloci[j]- \alloci[j+1])\vali[j]$,
respectively.
\end{lemma}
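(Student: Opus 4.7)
The plan is to split the proof into three parts: swap monotonicity is necessary for envy-freeness, the stated prices give envy-freeness under swap monotonicity, and these prices are exactly the extreme envy-free prices. Throughout I would adopt the convention (after relabeling, which is without loss of generality in a symmetric environment) that $\vali[1] \geq \vali[2] \geq \cdots \geq \vali[n]$, so that swap monotonicity reads $\alloci[i] \geq \alloci[i+1]$, and I would anchor the induction with the null-slot boundary $\alloci[n+1]=0$, $\vali[n+1]=0$, $\paymenti[n+1]=0$ corresponding to the outside option.

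For necessity, it suffices to look at the envy constraints between consecutive agents. Adding $\vali[i] \alloci[i] - \paymenti[i] \geq \vali[i] \alloci[i+1] - \paymenti[i+1]$ and $\vali[i+1] \alloci[i+1] - \paymenti[i+1] \geq \vali[i+1] \alloci[i] - \paymenti[i]$ yields $(\vali[i] - \vali[i+1])(\alloci[i] - \alloci[i+1]) \geq 0$, and the sort-by-value assumption then forces $\alloci[i] \geq \alloci[i+1]$.

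For sufficiency and the min/max characterization, I would rewrite the same pair of inequalities as a bracket $\vali[i+1](\alloci[i] - \alloci[i+1]) \leq \paymenti[i] - \paymenti[i+1] \leq \vali[i](\alloci[i] - \alloci[i+1])$, which is non-empty precisely because of swap monotonicity together with the value ordering. Starting from $\paymenti[n+1]=0$ and inducting downward, always choosing the lower endpoint yields the stated $\paymenti[i]^{\min}$, and always choosing the upper endpoint yields $\paymenti[i]^{\max}$. The key step is then to check that these prices produce envy-freeness between \emph{every} pair $i<j$, not just consecutive ones. I would do this by telescoping: $\paymenti[i]^{\min} - \paymenti[j]^{\min} = \sum_{k=i}^{j-1} \vali[k+1](\alloci[k]-\alloci[k+1])$, while $\vali[i](\alloci[i]-\alloci[j]) = \sum_{k=i}^{j-1}\vali[i](\alloci[k]-\alloci[k+1])$ and analogously for $\vali[j](\alloci[i]-\alloci[j])$. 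Since $\vali[j] \leq \vali[k+1] \leq \vali[i]$ for $k$ in the summation range and each $\alloci[k]-\alloci[k+1]$ is nonnegative by swap monotonicity, a term-by-term comparison gives $\vali[j](\alloci[i]-\alloci[j]) \leq \paymenti[i]^{\min} - \paymenti[j]^{\min} \leq \vali[i](\alloci[i]-\alloci[j])$, which is precisely the envy-free bracket between $i$ and $j$. The same telescoping, with $\vali[k+1]$ replaced by $\vali[k]$, handles $\paymenti[i]^{\max}$.

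I expect the main obstacle to be this globalization step: passing from respecting the consecutive-pair brackets to respecting the all-pair brackets. The telescoping identity works cleanly once the directions of the two monotonicity chains on $\vali[\cdot]$ and $\alloci[\cdot]$ are aligned, but keeping the signs consistent requires care. Once this is in hand, tightness of the extremal prices is a short downward induction from the null boundary, since any envy-free payment vector must lie within every consecutive bracket and therefore cannot fall below the pointwise infimum or rise above the pointwise supremum produced by always selecting the corresponding endpoint.
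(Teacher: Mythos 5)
The paper cites \citet{HY-11} for this lemma and gives no in-text proof, so there is no in-paper argument to compare against; the evaluation below is purely of correctness.

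Your proposal is correct and complete. The decomposition into consecutive-pair brackets, the downward induction from the null-slot boundary $\paymenti[n+1]=0$ (which encodes individual rationality as envy toward an always-available outside option), and the telescoping argument passing from local to global envy-freeness are all sound. In fact the telescoping establishes a slightly stronger fact than you state: \emph{any} price vector satisfying the consecutive brackets automatically satisfies all pairwise brackets, since $\paymenti[i]-\paymenti[j]=\sum_{k=i}^{j-1}(\paymenti[k]-\paymenti[k+1])$ and each summand is squeezed between $\vali[k+1](\alloci[k]-\alloci[k+1])$ and $\vali[k](\alloci[k]-\alloci[k+1])$, both of which are bracketed by $\vali[j](\alloci[k]-\alloci[k+1])$ and $\vali[i](\alloci[k]-\alloci[k+1])$; this is exactly what licenses the tightness argument in your final paragraph, because the consecutive brackets are then the \emph{only} binding constraints and the extremal vectors are reached coordinate-by-coordinate. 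One small caution: the paper's lead-in phrases the sorting convention as ``monotonically non-decreasing,'' but your non-increasing convention ($\vali[1]\geq\cdots\geq\vali[n]$) is clearly the one intended --- the stated formulas, the direction $\alloci\geq\alloci[i+1]$ of swap monotonicity, and the paper's later references to ``agent 1'' as the highest-valued agent all presume values sorted high to low; the paper's phrasing is evidently a typo.
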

Notice that envy-free payments are monotone non-decreasing in agent
values so an envy-free outcome is budget feasible if and only if the
highest valued agent (i.e., agent 1) has payment
$\paymenti[1] \leq\budget$.

The envy-free optimal benchmark for welfare and revenue with budgets
is defined by optimizing over all envy-free outcomes with respect to
the respective objective.  (These benchmarks are further characterized
in Section~\ref{s:benchmarks}.)
\begin{align*}
\EFOW(\vals,\budget) &= \max\big\{\sum\nolimits_i \vali\alloci: (\allocs,{\payments}) ~\hbox{\rm is EF, IR and BR} \big \}\\
\EFOR(\vals, \budget) &= \max\big \{\sum\nolimits_i \paymenti: (\allocs,{\payments}) ~\hbox{\rm is EF, IR and BR} \big \} 
\end{align*}
A prior-free guarantee about a mechanism's performance is defined as
follows.  A mechanism $\mech$ is a $\beta$-approximation to an
envy-free benchmark if its expected performance $\mech(\vals,\budget)$
is at least $\beta \EFO(\vals,\budget)$ for all $\vals$ and $\budget$.
For technical reasons, we slightly modify the envy-free benchmark for
revenue and instead approximate $\EFOR(\vals \super 2,\budget)$ where
$\vals \super 2 = (\vali[2],\vali[2],\vali[3],\ldots,\vali[n])$.  This
is necessary because, e.g., when $\vali[1] \gg n \vali[2]$, it is
impossible to approximate $\EFOR(\vali[1],\budget)$.  When the context
is clear, we will remove the superscripts and the budget and write
$\EFO(\vals)$ for readability.

\section{The Envy-free Benchmark}
\label{s:benchmarks}

In this section we characterize welfare-optimal envy-free outcomes for
agents with a common budget in symmetric, cardinal environments; at
the end of the section we adapt the characterization to the objective
of revenue.  

Recall that in such an environment an allocation $\allocs =
(\alloci[1],\ldots,\alloci[n])$ is envy-free if and only if it is
swap-monotone and its minimum payments are given by the formula
$\paymenti^{\min} = \sum_{j=i+1}^{n} \vali[j] ( \alloci[j-1] -
\alloci[j])$ for all agents $i$ (Lemma~\ref{l:envyfree}).  Note that
in order to maximize welfare subject to a budget constraint, picking
the minimum envy-free payments is clearly optimal. Further, as
envy-free payments are monotone, it is sufficient to impose the budget
constraint only on the payment of the top agent, i.e.,
$\paymenti[1]$. Therefore, the welfare-optimal envy-free allocation
can be captured by the following linear program (LP).\footnote{Notice
  that the Lemma~\ref{l:envyfree} allows us replace the IC and IR
  constraint with a monotonicity constraint on $\allocs$ and it allows
  payment constraints to be expressed in terms of values.}
\begin{align}
\label{lp.efo}
\max\quad & \sum\nolimits_{i =1}^n \vali[i] \alloci\\
\notag
\textrm{s.t.}\quad &\alloci \geq \alloci[i+1] & \forall\, i\\
\notag
& \paymenti[1] = \sum\nolimits_{i=2}^{n} \vali[i] ( \alloci[i-1] - \alloci) \leq \budget.\\
\notag
& \allocs~ \textrm{is feasible}.\\
\intertext{The relaxation of this LP obtained by Lagrangifying the budget constraint is as follows.}
\label{lp.lefo}
\max\quad & \sum\nolimits_{i =1}^n \vali[i] \alloci - \lag \big( \sum\nolimits_{i= 2}^{n} \vali[i] ( \alloci[i-1] - \alloci)\big) + \lag \budget\\
\notag
\textrm{s.t.}\quad & \alloci \geq \alloci[i+1] &\forall\, i\\
\notag
& \allocs~ \textrm{is feasible}.
\end{align}
\begin{lemma}
\label{lem.lagrangian} 
An allocation is optimal for LP \eqref{lp.efo} if and only if, either 
\begin{itemize} 
\item for some choice of $\lag>0$, the allocation is optimal for the
Lagrangian relaxation \eqref{lp.lefo} with $\lag$,  and it satisfies
the budget constraint with equality, $\sum_{i=2}^{n}\vali[i]
(\alloci[i-1]-\alloci)=\budget$, or
\item the allocation is optimal for the Lagrangian relaxation
\eqref{lp.lefo} with $\lag=0$ and satisfies the budget constraint, 
$\sum_{i=2}^{n}\vali[i](\alloci[i-1]-\alloci)\leq\budget$.
\end{itemize}
\end{lemma}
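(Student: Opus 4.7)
The plan is to recognize this as the standard Lagrangian/complementary-slackness characterization of optimal primal solutions for a convex program in which we dualize only the single budget constraint $\paymenti[1] \leq \budget$; the monotonicity constraints and feasibility constraint $\allocs \in \feasibles$ are kept ``inside'' the Lagrangian relaxation. The two bullets in the statement are exactly the two branches of the complementary-slackness condition $\lag(\paymenti[1](\allocs) - \budget) = 0$ combined with dual feasibility $\lag \geq 0$.

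For the sufficiency direction (the stated conditions imply primal optimality), I would argue as follows. Fix $\lag \geq 0$ and $\allocs$ satisfying one of the two bulleted conditions. For any $\allocs'$ feasible in \eqref{lp.efo} (so in particular $\sum_{i=2}^{n} \vali[i](\alloci[i-1]' - \alloci') \leq \budget$), the Lagrangian objective of \eqref{lp.lefo} evaluated at $\allocs'$ is at least the \eqref{lp.efo} objective at $\allocs'$, because $\lag \geq 0$ multiplies a nonnegative slack. Since $\allocs$ maximizes the Lagrangian over the looser feasible set (only monotonicity and $\feasibles$), the Lagrangian value at $\allocs$ dominates that at $\allocs'$, hence dominates the \eqref{lp.efo} value at $\allocs'$. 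Finally, the complementary-slackness hypothesis ensures $\lag (\sum_{i=2}^{n}\vali[i](\alloci[i-1]-\alloci) - \budget) = 0$, so the Lagrangian value at $\allocs$ coincides with its \eqref{lp.efo} objective. Chaining these gives $\allocs$ is primal-optimal; feasibility for \eqref{lp.efo} also follows from the hypotheses since monotonicity, $\feasibles$, and the budget all hold.

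For the necessity direction I would appeal to strong duality. The problem \eqref{lp.efo} is a convex program (linear objective, convex feasible set by convexity of $\feasibles$ together with the linear monotonicity and budget constraints). Taking $\allocs = 0$, which lies in $\feasibles$ by downward closure and yields payment $0 < \budget$, supplies a Slater point, so strong duality for the budget constraint holds. Thus there exists an optimal dual multiplier $\lag^* \geq 0$ such that any primal optimum $\allocs^*$ (i) maximizes the Lagrangian \eqref{lp.lefo} at $\lag^*$ subject to monotonicity and $\allocs \in \feasibles$, and (ii) satisfies complementary slackness $\lag^*(\paymenti[1](\allocs^*) - \budget) = 0$. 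Case-splitting on whether $\lag^* > 0$ or $\lag^* = 0$ recovers exactly the two bullets.

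The main obstacle I expect is invoking strong duality cleanly given that $\feasibles$ is only assumed convex, symmetric, and downward-closed (not polyhedral). The resolution is Slater's condition as above, which the downward-closure hypothesis essentially hands us for free (the degenerate case $\budget = 0$ forces $\paymenti[1]=0$ and collapses to the $\lag=0$ bullet trivially). The only other minor subtlety is the telescoping form of $\paymenti[1]$: I would simplify $\sum_{i=2}^{n}\vali[i](\alloci[i-1]-\alloci)$ by Abel summation if needed, but for the duality argument it suffices to treat it as a fixed linear functional of $\allocs$.
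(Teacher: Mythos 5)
Your proposal is correct and takes essentially the same route as the paper, which proves this in one line by observing that the two bullets are exactly the complementary-slackness / Lagrangian-optimality conditions for an LP; you simply spell out both directions (sufficiency via the standard Lagrangian sandwich argument, necessity via strong duality). Your Slater-point observation ($\allocs = 0$, using downward closure) is a useful addition when $\feasibles$ is merely convex rather than polyhedral, though note your aside that the $\budget = 0$ case ``collapses to the $\lag = 0$ bullet'' is not quite right --- there the unconstrained welfare maximizer generally violates the budget, so one needs $\lag > 0$; this edge case is harmless for the LP framing the paper adopts since LP duality does not require Slater.
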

\begin{proof} 
The statement of the theorem is equivalent to complementary slackness conditions characterizing optimal solutions of an LP. 
\end{proof}

%
%
We now consider the optimization problem given by the Lagrangian
relaxation \eqref{lp.lefo} for a fixed choice of $\lag$.  The
objective function of the Lagrangian relaxation \eqref{lp.lefo} can be
rewritten as $\sum_i \lvvi \alloci $ where
$\lvvi[1]=\vali[1]-\lag\vali[2]$ and $\lvvi=\vali
+\lag(\vali-\vali[i+1])$ for $i\geq 2$.  We refer to these as {\em
  Lagrangian virtual values}.  The Lagrangian relaxation is now simply
the problem of finding the Lagrangian virtual surplus optimal
allocation, subject to feasibility, swap-monotonicity, and, when
$\paymenti[1] = \budget$ when $\lambda > 0$.

%
%
Optimizing the Lagrangian virtual surplus $\sum_i \lvvi \alloci$ of
non-monotone virtual values subject to swap monotonicity (of the
allocation) can be simplified via the technique of {\em ironing}
\citep[cf.][]{M-81}.  The resulting ironed virtual values are monotone
and, therefore, ironed virtual surplus maximization without a
swap-monotonicity constraint on the allocation will always give an
allocation that is swap monotone.

Ironed Lagrangian virtual values are constructed as follows.  The {\em
  Lagrangian revenue curve} is the cumulative Lagrangian virtual value
$\lrev(j) = \sum\nolimits_{i=1}^j \lvvi = \sum\nolimits_{i=1}^j \vali
- \lag\vali[j+1]$.  The {\em ironed Lagrangian revenue curve} $\ilrev$
is the smallest concave function that is point-wise larger than $\lrev$
and the origin.  The {\em ironed Lagrangian virtual value} of $i$ is
the left slope of the ironed Lagrangian revenue curve, i.e., $\ilvvi =
\ilrev(i) - \ilrev(i-1)$.  An {\em ironed interval} $I =
\{i,\ldots,j\}$ is a sequence of consecutive agents where $\ilrev(i-1)
= \lrev(i-1)$, $\ilrev(j) = \lrev(j)$, and $\ilrev(k) > \lrev(k)$ for
$k \in \{i,\ldots,j-1\}$. The left-slope of the ironed Lagrangian
revenue curve for all $j \in I$ is the same; therefore, their ironed
Lagrangian virtual values are the same.

\begin{lemma}
\label{lem.ironing} 	
An allocation $\allocs$ is optimal for the Lagrangian relaxation
\eqref{lp.lefo} if and only if the allocation maximizes the ironed
Lagrangian virtual surplus and the allocation is constant over agents
in the same ironed interval.
\end{lemma}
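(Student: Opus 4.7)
The plan is to prove the lemma by a summation-by-parts (Abel) identity that rewrites the Lagrangian virtual surplus as an integral of the revenue curve against the successive decrements $\alloci-\alloci[i+1]$, and then to compare this expression with the analogous identity for the ironed revenue curve. This is the discrete analogue of Myerson's ironing, adapted to our setting where agents are indexed in decreasing order of value so that swap monotonicity reads $\alloci \geq \alloci[i+1]$.

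First, writing $\lvvi = \lrev(i)-\lrev(i-1)$ with $\lrev(0)=0$, summation by parts yields
\begin{align*}
\sum\nolimits_{i=1}^n \lvvi \alloci = \lrev(n)\,\alloci[n] + \sum\nolimits_{i=1}^{n-1} \lrev(i)\,(\alloci-\alloci[i+1]),
\end{align*}
and the analogous identity holds with $\lrev$ replaced by $\ilrev$. Because $\ilrev$ is the smallest concave majorant of $\lrev$ through the origin, $\ilrev(n)=\lrev(n)$ (otherwise one could lower $\ilrev(n)$ while preserving concavity and the majorant property). Subtracting the two identities gives
\begin{align*}
\sum\nolimits_i \ilvvi \alloci - \sum\nolimits_i \lvvi \alloci = \sum\nolimits_{i=1}^{n-1} \big(\ilrev(i)-\lrev(i)\big)(\alloci-\alloci[i+1]).
\end{align*}
For any swap-monotone $\allocs$ both factors on the right are non-negative, so the ironed surplus upper-bounds the Lagrangian surplus, with equality if and only if whenever $i$ lies strictly inside an ironed interval (so that $\ilrev(i)>\lrev(i)$) we have $\alloci=\alloci[i+1]$; that is, $\allocs$ is constant on every ironed interval.

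To complete the proof I would show that a maximizer of the ironed virtual surplus over $\feasibles$ can always be chosen to be both swap-monotone and constant on ironed intervals, so that it simultaneously achieves the upper bound above and is feasible for \eqref{lp.lefo}. Since $\ilvvi$ is the left slope of a concave function it is non-increasing in $i$; combined with the permutation symmetry of $\feasibles$, any misordered pair in an optimizer can be swapped (a feasible operation by symmetry) to weakly improve the objective, yielding a swap-monotone optimizer. Because $\ilvvi$ is \emph{constant} across each ironed interval, averaging this optimizer over permutations acting only within ironed intervals preserves the objective and, by convexity and symmetry of $\feasibles$, preserves feasibility, giving an optimizer that is constant on every ironed interval. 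For this allocation the equality condition above holds, so the Lagrangian maximum over swap-monotone feasible $\allocs$ equals the ironed maximum over $\feasibles$, and the optimal $\allocs$ for \eqref{lp.lefo} are exactly those maximizing ironed virtual surplus and constant on ironed intervals. The main obstacle I anticipate is precisely this symmetrization step: monotonicity of $\ilvvi$ alone is insufficient, and one must invoke both convexity and full permutation symmetry of $\feasibles$ to produce an ironed-surplus maximizer that is constant on ironed intervals, rather than merely one whose total mass within each ironed interval is pinned down.
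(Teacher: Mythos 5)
Your proof is correct and follows the same summation-by-parts (Myerson ironing) argument that the paper invokes by citing the corresponding lemma of \citet{HY-11}: the endpoint equality $\ilrev(n)=\lrev(n)$, the sign analysis showing both factors $\big(\ilrev(i)-\lrev(i)\big)$ and $(\alloci-\alloci[i+1])$ are nonnegative for swap-monotone allocations, and the symmetrization step using convexity and permutation-symmetry of $\feasibles$ to produce an ironed-surplus maximizer constant on each ironed interval are exactly the ingredients of that proof.
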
 
Lemma~\ref{lem.ironing} has the same proof as the corresponding lemma
of \citet{HY-11} for (non-Lagrangian) ironed virtual surplus
maximization.

We now describe a tie-breaking procedure for Lagrangian ironed virtual
surplus maximization that (a) serves agents in the same ironed
interval with the same probability (as per Lemma~\ref{lem.ironing})
and (b) meets the budget constraint with equality (as per
Lemma~\ref{lem.lagrangian}).  Notice ties may arise because agents
within the same ironed interval have same ironed virtual value,
because agents in consecutive ironed intervals may have the same
ironed virtual value, and because several sets of agents may have the
same cumulative ironed virtual value.  The first kind of tie must be
broken uniformly at random, the latter two kinds of ties must be broken so
as to meet the budget constraint with equality.

The tie-breaking rule we will give is based on agents' values.  Notice
that the objective of \eqref{lp.lefo} is the difference between the
social surplus and $\lambda \paymenti[1]$ (the payment of the top
agent scaled by $\lambda$).  Therefore, when there are ties in
Lagrangian virtual surplus, it must be that the tied allocation with
the maximum (resp.~minimum) surplus minimizes (resp.~maximizes) the
payment of the top agent.  This maximum payment must be over budget
and the minimum payment must be under budget.  Therefore, the
appropriate convex combination of these two allocations has payment
exactly equal to the budget.  The outcome produced is welfare optimal
for budgeted agents.

The following approach optimizes Lagrangian ironed virtual surplus
with tie-breaking to maximize or minimize the welfare subject to (a)
swap monotonicity and (b) agents within the same ironed interval
receiving the same probability of service.  To maximize welfare,
average the values of agents within each ironed interval and tie-break
to maximize this averaged welfare.  This ensures that the agents in
the same ironed interval are treated the same, but otherwise allows
the mechanism to optimize welfare over sets of agents with tied
Lagrangian ironed virtual surplus.  To minimize welfare we would like
to optimize the negative of the welfare over allocations with equal
Lagrangian ironed virtual surplus.  However, this could result in
failure of swap monotonicity as agents in successive ironed intervals
with the same Lagrangian ironed virtual value will be ranked in the
opposite order as required for swap monotonicity.  Therefore, to minimize
welfare, average the values of agents with equal Lagrangian ironed
virtual value (this includes the averaging of agents within the same
ironed interval, but additionally averages agents in successive ironed
intervals that have the same Lagrangian ironed virtual value), and
tie-break to minimize this averaged welfare.

For symmetric, ordinal environments we can be more precise about the
tie-breaking process above.  In particular, without budgets, the
welfare-optimal allocation is given by the greedy-by-value algorithm.
Any other swap monotone allocation can be thought of as starting with
the greedy outcome and then shifting some of the allocation from
higher valued agents to lower valued agents (by partially randomizing
their order).  Of course, any such reallocation lowers the welfare.
The allocation that maximizes Lagrangian ironed virtual surplus and
welfare (subject to Lemma~\ref{lem.ironing}) is the one that
randomizes the order of agents in each ironed interval and then
applies the greedy algorithm to this ordering.  The allocation that
maximizes Lagrangian ironed virtual surplus and minimizes welfare is
the one that randomizes the order of all sets of agents with equal
Lagrangian ironed virtual value.  As above, the appropriate convex
combination of these two allocations meets the budget constraint with
equality.

Notice that above we are taking the convex combination of a minimal
ironing and maximal ironing.  A set of consecutive ironed intervals
with the same Lagrangian ironed virtual value by this construction
will be {\em partially ironed}.  This partial ironing is absent in
existing characterizations of optimal mechanisms.  Partial ironing is
never necessary for revenue maximization without budgets (for which
the ironing technique was first developed) and prior work on welfare or
revenue maximization with budgets has restricted attention to
benevolent distributions where there is only a single ironed interval
(if any) that includes the highest-valued agent (and therefore there
is no partial ironing).

We summarize the discussion above with the following theorem.

\begin{theorem} \label{thm:EFO-char.} 
For symmetric, ordinal environments the optimal envy-free allocation
is given by Lagrangian ironed virtual surplus maximization (for an
appropriate Lagrangian variable) where agents within an ironed
interval are completely ironed and agents with equal Lagrangian ironed
virtual values are partially ironed (for an appropriate probability).
\end{theorem}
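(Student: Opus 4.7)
My plan is to combine Lemmas~\ref{lem.lagrangian} and~\ref{lem.ironing} with an explicit construction of the tie-breaking rule, following the discussion immediately preceding the theorem. By Lemma~\ref{lem.lagrangian}, any optimal envy-free allocation is obtained by first fixing a Lagrangian multiplier $\lag\geq 0$ and then solving the Lagrangian relaxation \eqref{lp.lefo} in such a way that the budget is met with equality (or, in the degenerate $\lag=0$ case, is not violated). By Lemma~\ref{lem.ironing}, the solutions to \eqref{lp.lefo} are exactly the allocations that maximize the ironed Lagrangian virtual surplus $\sum_i \ilvvi \alloci$ and that are constant on each ironed interval. So it suffices to show that, for an appropriately chosen $\lag$, one can break ties among ironed-virtual-surplus maximizers using exactly the ``partial ironing'' rule in the statement, and exhibit a probability that drives $\paymenti[1]$ to $\budget$.

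The first step is to pin down the space of tie-breaking choices in symmetric ordinal environments. Since the environment is ordinal and symmetric, the ironed-Lagrangian-virtual-surplus maximizer is obtained by greedy selection over agents ordered by ironed Lagrangian virtual value, and ties occur precisely when several agents share an ironed virtual value. As explained in the preamble to the theorem, the welfare-maximizing tie-break respects each ironed interval but breaks ties between distinct intervals greedily by value, i.e.\ it randomizes only within each ironed interval; the welfare-minimizing tie-break instead pools all consecutive ironed intervals sharing an ironed virtual value and randomizes over the union. Both allocations are swap monotone, lie in $\feasibles$, satisfy Lemma~\ref{lem.ironing}, and therefore attain the same Lagrangian-relaxed objective. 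Call these $\allocs^{\max}$ and $\allocs^{\min}$ and write $\paymenti[1]^{\max}$ and $\paymenti[1]^{\min}$ for the corresponding payments of the top agent.

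The second step is a sandwich argument for the budget. The objective of \eqref{lp.lefo} can be rewritten as (social surplus) $-\lag\paymenti[1] + \lag \budget$. Since $\allocs^{\max}$ and $\allocs^{\min}$ give the same Lagrangian value but respectively the maximum and minimum surplus among all optimizers of \eqref{lp.lefo}, they must give the minimum and maximum values of $\paymenti[1]$, so $\paymenti[1]^{\min}\leq \paymenti[1]^{\max}$. Now choose $\lag\geq 0$ to be the supremum of multipliers for which some ironed-virtual-surplus maximizer has $\paymenti[1]\leq \budget$; a standard monotonicity/continuity argument (as $\lag$ grows, the ironing flattens and top-agent payments weakly decrease) gives $\paymenti[1]^{\min}\leq \budget \leq \paymenti[1]^{\max}$ at the chosen $\lag$. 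Because any convex combination $\alpha\allocs^{\max}+(1-\alpha)\allocs^{\min}$ is feasible, swap monotone, constant on ironed intervals, and has payment $\alpha \paymenti[1]^{\max}+(1-\alpha)\paymenti[1]^{\min}$, intermediate-value selection of $\alpha$ makes the payment equal to $\budget$. This convex combination is exactly the ``partial ironing'' described in the theorem: the weight $\alpha$ interpolates between randomizing only inside a single ironed interval and randomizing across all consecutive ironed intervals of equal virtual value.

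The expected obstacle is verifying the two monotonicity facts used above, namely (i) $\paymenti[1]^{\max}$ and $\paymenti[1]^{\min}$ are genuine max and min over all ironed-virtual-surplus maximizers (not just over these two canonical tie-breaks), and (ii) as $\lag$ varies, the range $[\paymenti[1]^{\min},\paymenti[1]^{\max}]$ sweeps through $\budget$. Claim~(i) follows from swap monotonicity plus the observation that shifting allocation probability from a higher-valued to a lower-valued agent, while keeping the allocation constant on each ironed interval, weakly decreases $\paymenti[1]^{\min}=\sum_{j\geq 2}\vali[j](\alloci[j-1]-\alloci[j])$; so $\allocs^{\min}$ (the maximal permissible shift) minimizes $\paymenti[1]$ and $\allocs^{\max}$ maximizes it. Claim~(ii) is the standard Lagrangian continuity statement: at $\lag=0$ we recover the unconstrained welfare optimum (greedy), whose payment may exceed $\budget$; at large $\lag$ the ironing collapses all agents to the same probability and $\paymenti[1]=0$. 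Combining these with Lemma~\ref{lem.lagrangian} gives optimality of the constructed allocation for \eqref{lp.efo}, completing the proof.
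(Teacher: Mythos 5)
Your proof matches the approach the paper uses to justify this theorem (which it only ``summarizes'' from the preceding discussion in Section~\ref{s:benchmarks}): reduce to the Lagrangian relaxation via Lemma~\ref{lem.lagrangian}, reduce further to ironed Lagrangian virtual-surplus maximization via Lemma~\ref{lem.ironing}, exhibit the two extremal tie-breaks $\allocs^{\max}$ (iron only within intervals) and $\allocs^{\min}$ (iron across all consecutive intervals of equal ironed virtual value), and take the convex combination that drives $\paymenti[1]$ to $\budget$. Two small slips, both easily repaired and self-correcting: in your second paragraph you assert that $\allocs^{\max}$, having maximum surplus, has \emph{minimum} $\paymenti[1]$---since the Lagrangian objective equals $\text{surplus}-\lag\paymenti[1]$ plus a constant, ties force surplus and $\paymenti[1]$ to move \emph{together}, so $\allocs^{\max}$ in fact has the maximum $\paymenti[1]$, exactly as your own claim~(i) shows a few lines later (the paper's prose contains the same inversion, which you appear to have inherited); and the threshold multiplier should be the \emph{infimum}, not the supremum, of $\lag\geq 0$ for which some ironed-virtual-surplus maximizer has $\paymenti[1]\leq\budget$, since as $\lag$ grows the ironing widens and payments weakly decrease, making the admissible set an upper ray whose supremum is $+\infty$. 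Neither slip affects the sandwich $\paymenti[1]^{\min}\leq\budget\leq\paymenti[1]^{\max}$ at the chosen $\lag$, so the intermediate-value mixing argument and the identification of the mixture with the partial ironing in the theorem statement go through as written.
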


The characterization of the revenue-optimal envy-free outcome for
agents with a common budget is the same as above except for (a) the
specific formula for Lagrangian virtual values and (b) the
tie-breaking procedure.  The tie-breaking procedure is a bit more
complex than for the welfare objective.

Virtual values are derived starting from the maximum envy-free
payments $\paymenti^{\max} = \sum\nolimits_{j=i}^{n} \vali[j] (
\alloci[j] - \alloci[j+1])$ from Lemma~\ref{l:envyfree}.  The
objective revenue (without budgets) is given by maximization of the
virtual surplus for (non-Lagrangian) virtual values $i \vali -
(i-1)\vali[i-1]$ \citep{HY-11}.  Relaxing the budget constraint gives
Lagrangian virtual values $\lvvi[1]= \vali[1](1 -\lag)$ and $\lvvi =
(i-\lag) \vali - (i-1-\lag)\vali[i-1]$ for $i \geq 2$.  The Lagrangian
revenue curve is $\lrev(i) = (i-\lag) \vali$ with $\lrev(0) = 0$.

For tie breaking, notice that the analogous objective of the
Lagrangian relaxation \eqref{lp.lefo} is revenue minus $\lambda
\paymenti[1]$.  Therefore, among allocations with the same Lagrangian
ironed virtual surplus, the one with the highest revenue has the
highest payment of the top agent and the one with the lowest revenue
has the lowest payment of the top agent.  Revenue is, of course, equal to the (non-Lagrangian) virtual surplus.  Whereas for maximizing and
minimizing welfare the monotonicity of values implies that we should
either prefer to iron as little or as much as possible, for maximizing
revenue, the virtual values may not be monotone.  Therefore, ironing (i.e.,
averaging) can be good and bad for both maximizing and minimizing
revenue.  The following process averages the (non-Lagrangian) virtual
values correctly.  
Consider a set of consecutive Lagrangian ironed intervals with the
same Lagrangian ironed virtual value. Average the (non-Lagrangian)
virtual values within each interval, calculate the induced revenue
curve (by summing prefixes of these averaged virtual values), consider
the two-dimensional convex hull the point set that defined this
revenue curve. For maximum revenue, iron as for the upper convex hull;
for minimum revenue, iron as for the lower convex hull.  Optimizing
Lagrangian ironed virtual surplus with tie-breaking by averaged
virtual surplus (from the averaged virtual values calculated above)
gives the outcomes with the minimum and maximum payment of the top
agent.  Mixing between these appropriately to meet the budget
constraint with equality gives the revenue-optimal envy-free outcome.

\section{Welfare approximation for agents with a common budget}
\label{s:welfare}

In this section we study the (polyhedral) clinching auction
of~\citet{GMP-12} in position environments with a common budget.  The
outcome of the clinching auction is fundamentally simpler in structure
than those of the optimal incentive-compatible auction and optimal
envy-free outcome.  A fundamental construct in incentive-compatible
and envy-free optimization is {\em ironing}, that is, randomizing
between agents whose values fall within a given interval.  In
Section~\ref{s:benchmarks} we characterized welfare-optimal envy-free
outcomes as having multiple disjoint ironed intervals.  Our first task
of this section is to give a similar discription of the outcome of the
clinching auction.  In these terms, the clinching auction has
(essentially) one ironed interval and it always contains the top
agent.  This ironed interval is partially ironed with the singleton
interval containing the next highest-valued agent.  We give a simple
closed-form expression for calculating exactly how this partial
ironing is performed.

The clinching auction is not welfare-optimal in two respects.  First,
given a Bayesian prior distribution, the clinching auction's expected
welfare is not generally optimal among all incentive compatible
mechanisms.  Second, though the outcome of the clinching auction is
envy free, it is not the welfare-optimal envy-free outcome.
Nonetheless, we show that the clinching welfare is a two-approximation
to the envy-free optimal welfare.

\subsection{The clinching auction for position environments}

\citet{GMP-12} generalize the clinching auction for budgeted agents
to ordinal environments.  In this section, we characterize the outcome
of this process for symmetric ordinal environments, a.k.a., position
environments.  At a high level, the clinching auction is described by
an ascending price-clock with agents clinching some of the supply at
the price as it increases.  By the symmetry of the environment and the
fact that values of agents above an offered price do not affect the
allocation, the budget and partial allocations are identical for each
agent that remains in the auction as the price increases.  The
clinching auction can thus be formulated as follows:

\begin{definition}[Clinching Auction] 
\label{d:clinching}
The {\em clinching auction} maintains an allocation and
price-clock that start from zero.  The price-clock ascends
continuously and the allocation and budget are adjusted as follows.
\begin{enumerate}
 \item Agents whose values are less than the price-clock are removed and
   their allocation is frozen.
 \item The {\em demand} of any remaining agent is the remaining budget
   divided by the price clock.
\item Each remaining agent {\em clinches} (and adds to their current
  allocation) an amount that corresponds to the largest fraction of
  their demand that can be satisfied when all other remaining agents
  are first given as much of their demand as possible
  (subject to the feasibility constraint).\footnote{This step is vague
    in general environments; however, in ordinal
    environments, i.e.,  where the greedy algorithm is optimal, it is precise.}
\item The budget and allocation are updated to reflect the amount
  clinched in the previous step.
\end{enumerate}
The auction ends when everyone is removed or the remaining budget is
zero.
\end{definition}

The reason that the clinching auction is relatively simple to
describe in position environments with a common budget is that the
feasibility constraint imposed by clinching auctions is one where
allocation probability of top positions can be shifted to bottom
positions (e.g., by randomizing), but not vice versa.  Therefore, an
allocation $\allocs$ (in decreasing order) is feasible for position
weights $\poss$ (in decreasing order) if the cumulative allocation
$\cumalloci = \sum_{j \leq i} \alloci[j]$ at each coordinate $i$ is at
most the cumulative position weight $\supplyi = \sum_{j \leq i}
\posi[j]$.

\begin{proposition} 
\label{prop:clinching-pareto-optimal}
The clinching auction is incentive compatible and Pareto optimal in
position environments with a common budget.
\end{proposition}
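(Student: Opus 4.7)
The plan is to reduce to the general polyhedral clinching auction of \citet{GMP-12}, which is already known to be incentive compatible and Pareto optimal for budgeted agents in polymatroidal (in particular ordinal) environments. The remaining work is to verify that (i) position environments with a common budget fit their framework, and (ii) Definition \ref{d:clinching} is the specialization of their mechanism to this case.

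For (i), as noted in the paragraph preceding the proposition, the feasibility polytope $\feasibles$ of a position environment---those $\allocs$ whose sorted prefix sums $\cumalloci$ satisfy $\cumalloci \leq \supplyi$ for every $i$---is polymatroidal, symmetric, convex, and downward-closed, placing it inside the framework of \citet{GMP-12}. A common budget is the special case of their setting in which $\budgeti = \budget$ for every agent $i$.

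For (ii), I would match Definition \ref{d:clinching} step by step against the polyhedral clinching primitive. At any clock price $p$, every remaining agent has the same residual budget and hence the same demand $\budget/p$; by symmetry of $\feasibles$ all remaining agents must clinch the same amount. That amount is the largest $\marginal$ such that giving each other remaining agent its full demand together with $\marginal$ additional probability to agent $i$ is still feasible, which is exactly the polyhedral clinching rule of \citet{GMP-12} specialized to the symmetric polymatroidal case. Because greedy-by-weight is optimal in ordinal environments, Step 3 is unambiguous, as the footnote to the definition promises.

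Given (i) and (ii), both incentive compatibility and Pareto optimality follow directly from the corresponding theorems of \citet{GMP-12}. The delicate point---rather than a real obstacle---is confirming that Definition \ref{d:clinching} is a literal specialization of the polyhedral clinching auction rather than a distinct variant; this is a short bookkeeping check that uses the symmetry of the environment and the common budget to collapse the per-agent demands and clinched quantities to a single value at each price.
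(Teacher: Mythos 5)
Your approach matches the paper's exactly: the paper gives no separate proof of this proposition and instead, in the sentence immediately following it, notes that it is a special case of the more general result of \citet{GMP-12}, which is precisely the reduction you carry out. Your proposal simply makes explicit the bookkeeping (verifying that the position-environment polytope is polymatroidal and that Definition~\ref{d:clinching} is the symmetric, common-budget specialization of the polyhedral clinching rule) that the paper leaves to the reader.
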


Pareto optimality means that there is no other reallocation of goods
and money that makes an agent strictly better off and no agents are
worse off.  Proposition~\ref{prop:clinching-pareto-optimal}, which is
a special case of a more general result of \citet{GMP-12} implies
implies the following structure on the outcome.  This structural
theorem generalizes one from \citet{DLN-08} (for single-item
auctions).  It shows that, essentially, the clinching auction is
ironing only the top agents.

\begin{theorem}
\label{t:char}
Order the agents and positions in decreasing order and let $\magent$ be
the highest-valued agent who pays strictly less than the budget.  Then,
\begin{enumerate}[(a)]
\item
the auction terminates the moment the price-clock exceeds
$\vali[\magent]$, 
\item agents with higher values than $\magent$ each receive the
same service probability (and pay the budget), 
\item agent $\magent$
receives at least the service probability of her corresponding position,
\item agents with lower values than $\magent$ each receive exactly
  the service probability of their corresponding positions, and
\item the outcome is envy free.
\end{enumerate}
\end{theorem}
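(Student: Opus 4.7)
The plan is to analyze the dynamics of the clinching auction in position environments, exploiting the symmetry of the setting and the ordinal structure of the feasibility constraint.

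First I would establish the symmetry invariant: at every moment before termination, all agents still in the auction share a common current allocation $x(\clock)$ and a common remaining budget $B(\clock)$. This follows by induction on the events of the price-clock's ascent, since all agents begin at $(0,\budget)$, removal at $\clock=\vali$ is a deterministic event, and the clinching step is specified symmetrically in the active agents' demands and in the (symmetric) feasibility constraint. Consequently agents drop out in the order $n, n{-}1, \ldots, 1$, each at $\clock = \vali$.

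Next I would identify the evolution of the symmetric allocation in each phase between consecutive drop-outs. When $k$ agents remain symmetrically with common allocation $x$, the cumulative-supply constraints $\cumalloci[j] \leq \supplyi[j]$ imply that if the other $k{-}1$ active agents are given as much of their demand as feasibility allows, they absorb the top $k{-}1$ positions and cap the residual available to the designated agent at $\posi[k]$. Thus as $\clock$ rises through $(\vali[k+1], \vali[k]]$, the common allocation grows from $\posi[k+1]$ toward the cap $\posi[k]$, with each unit of clinching costing $\clock$ out of the common budget. The phase ends with whichever event occurs first: (i) $x$ reaches the cap $\posi[k]$ and clinching halts, (ii) the remaining budget is exhausted and the auction terminates, or (iii) $\clock$ reaches $\vali[k]$ and agent $k$ drops out.

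Then I would case-split on how the auction terminates. If every phase ends via (i) followed by (iii), each agent $i$ drops out at $\clock = \vali$ with allocation $\posi[i]$ and payment strictly below $\budget$; then $\magent=1$ and every part of the theorem holds with agent $1$ receiving $\posi[1]$. Otherwise the budget is exhausted in some phase $m$ via event (ii): the $m$ remaining agents share a common allocation $x^\ast \in [\posi[m+1], \posi[m]]$ having each paid exactly $\budget$, while agents $m{+}1, \ldots, n$ dropped out in prior phases with allocations equal to their corresponding position weights and payments strictly below $\budget$. Hence $\magent = m{+}1$, and parts (a)--(d) follow with agent $\magent$ receiving exactly $\posi[\magent]$. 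Part (e) (envy-freeness) follows from Proposition~\ref{prop:clinching-pareto-optimal} together with a direct verification from the structural description using Lemma~\ref{l:envyfree}.

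The main obstacle will be carefully justifying the per-phase dynamics of the second step: namely, deriving from the polyhedral clinching rule that the effective cap on the symmetric active allocation in phase $k$ is exactly $\posi[k]$, and that clinching within the phase proceeds monotonically toward this cap, decrementing the budget in step with the price-clock. This requires explicitly matching the polyhedral clinching rule of~\citet{GMP-12} to the prefix-sum feasibility of position environments.
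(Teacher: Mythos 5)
Your per-phase model of the clinching dynamics contains a substantive error that invalidates the case analysis. You claim that when $k$ agents remain active with a common allocation $x$, the cap imposed by feasibility is $\posi[k]$, so $x$ stays within $[\posi[k+1], \posi[k]]$. This is wrong. Since the $k$ active agents are symmetric, the binding prefix-sum constraint is $kx \leq \supplyi[k]$, giving the cap $x \leq \iposi[k] = \supplyi[k]/k$, the \emph{average} of the top $k$ position weights, which is generally strictly larger than $\posi[k]$. The paper's tight example makes this concrete: a single item ($\posi[1]=1$, $\posi[j]=0$ for $j\geq 2$) with $N+1$ agents and budget $1$ ends with the top $N$ agents each receiving $1/N = \iposi[N]$, whereas your claimed interval $[\posi[N+1],\posi[N]] = \{0\}$ excludes this. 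Relatedly, your assertion that clinching proceeds continuously in every phase is also inaccurate: while the supply constraint (rather than the demand/budget constraint) is binding, clinching happens only at discrete drop-out moments; continuous clinching begins only once the budget constraint binds. As a consequence, your conclusion that agent $\magent$ receives \emph{exactly} $\posi[\magent]$ is too strong --- the theorem correctly states only ``at least,'' and $\magent$ does receive strictly more whenever the budget starts binding strictly inside the interval $(\vali[\magent+1], \vali[\magent])$.

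Your approach is also structurally different from the paper's. You attempt a bottom-up trace of the auction dynamics (in the spirit of the refined closed-form characterization proved separately in the appendix via Lemmas~\ref{l:clinchamt} and~\ref{l:clinchprocess}), but Theorem~\ref{t:char} is proved much more economically in the paper by invoking Proposition~\ref{prop:clinching-pareto-optimal}: once Pareto optimality and incentive compatibility are granted, feasibility forbids shifting allocation from lower to higher slots, while Pareto optimality forbids shifting it downward (a higher-valued non-budget-constrained agent could profitably buy it back), which pins agents below $\magent$ to exactly their position weights. Part~(a) is then argued by contradiction using the fact that retired agents' values do not affect the auction's trajectory. If you want to make the dynamics-based route rigorous, you would need to replace the $\posi[k]$ cap with $\iposi[k]$, distinguish the discrete (supply-constrained) and continuous (demand-constrained) clinching regimes, and prove the phase transition structure carefully --- essentially reproducing the appendix argument rather than the main-text proof.
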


\begin{proof}
An agent drops out of the clinching auction when her value is
exceeded; otherwise, the auction terminates with the price clock below
her value when the remaining budget is zero.  Let $\magent$ be the
last agent to drop out when her value is exceeded.  By the definition
of the clinching auction and symmetry, all higher-valued agents pay
the budget and receive the same probability of service.  Again by the
symmetry of the process there is no envy.

Now consider the agents $\magent, \ldots, n$ who are paying strictly
less than the budget.  Assume that initially all excess service probability
from the top $\magent-1$ is given to agent $\magent$.  Feasibility
implies that service probability cannot be shifted up from low-valued
agents to high-valued agents and Pareto optimality implies that
service probability cannot be shifted down.  Consider any probability
shifted down from a higher valued agent to a lower valued agent, as
these agents are not paying their budget, a Pareto improvement would
be for the higher valued agent to buy this shifted service probability
from the lower valued agent (at a per-unit price equal to her value).
Consequently, agents $\magent+1,\dots,n$ get their corresponding
position weight and agent $\magent$ gets at least her corresponding
position weight.

Finally, we show that the price clock stops immediately after it
exceeds $\vali[\magent]$.  Assume that the allocation probability of
$\magent$ is strictly higher than her corresponding position weight
(the case of equality is addressed by Theorem~\ref{t:charrefined}) and
suppose that the price clock continues to rise.  The actual values of
the agents who have not retired are never taken into account in the
behavior of the clinching auction.  Therefore we can lower
$\vali[\magent-1]$ to just below price clock and at which point
agent $\magent-1$ would retire (and not pay her full budget).  However, now
we have both $\magent$ and $\magent-1$ not paying their full budget
and $\magent$ is getting strictly more service probability than her
corresponding weight which contridicts the other results of this
theorem.
\end{proof}

\begin{figure}[th!]
\begin{center}
\begin{pspicture}(-1,-0.5)(11,5)
\pscustom[linestyle=solid,linecolor=gray,hatchcolor=red,fillstyle=vlines,hatchangle=45]{
 \psline(0,0.3)(1,0.3)(1,0.7)(2,0.7)(2,1.2)(2.4,1.2)(2.4,1.8)(3.2,1.8)(3.2,3.3)(0.0,3.3)(0,0.3)
 \moveto(9,0.5)
 \psline(9,0.5)(9,1)(10.5,1)(10.5,0.5)(9,0.5)
 \moveto(9,1.5)
 \psline(9,1.5)(9,2)(9.75,2)(9.75,1.5)(9,1.5)
 \moveto(9,2.5)
 \psline(9,2.5)(9,3)(9.5,3)(9.5,2.5)(9,2.5)}
\pscustom[linestyle=solid,linecolor=gray,hatchcolor=blue,fillstyle=vlines,hatchangle=135]{
 \pscurve(3.5,2.0)(4,2.7)(5,2.9)
 \psline(5,3)(5,3.5)(0,3.5)(0.0,3.3)(3.2,3.3)(3.2,2.0)(3.5,2.0)
 \moveto(9.75,1.5)
 \psline(9.75,1.5)(9.75,2)(10.5,2)(10.5,1.5)(9.75,1.5)
 \moveto(9.5,2.5)
 \psline(9.5,2.5)(9.5,3)(10,3)(10,2.5)(9.5,2.5)}
\pscustom[linestyle=solid,linecolor=gray,fillstyle=solid,fillcolor=yellow]{
 \psline(5,3.5)(5,3.7)(0,3.7)(0,3.5)(5,3.5)
 \moveto(3.5,2.0)
 \pscurve(3.5,2.0)(4,2.7)(5,2.9)
 \psline(5,2.0)(3.5,2.0)
 \moveto(10,2.5)
 \psline(10,2.5)(10,3)(10.5,3)(10.5,2.5)(10,2.5)}

\rput[l](10.7,0.75){$\budgeti[\magent]$}
\rput[l](10.7,1.75){$\budget$}
\rput[l](10.7,2.75){$\budgeti[\magent-1]$}
\psaxes[labels=none,ticks=none]{->}(9,4)
\psline(-0.1,3.7)(0.1,3.7)
  \psline[linestyle=dotted](0.1,3.7)(9.0,3.7)
  \rput[r](-.2,3.7){$\iposi[\magent-1]$}
\psline(-.1,3.3)(.1,3.3)
  \psline[linestyle=dotted](0.1,3.3)(9.0,3.3)
  \rput[r](-.2,3.3){$\iposi[\magent]$}
\psline(-.1,2.9)(.1,2.9)
  \psline[linestyle=dotted](.1,2.9)(5,2.9)
  \rput[r](-.2,2.9){$\alloci[\magent]$}
\psline(-0.1,2)(0.1,2)
  \psline[linestyle=dotted](0.1,2)(3.0,2)
  \rput[r](-0.2,2){$\posi[\magent]$}
  \rput[r](-.3,1.5){$\vdots$}
\psline(-0.1,0.7)(0.1,0.7)
  \psline[linestyle=dotted](0.1,0.7)(1.0,0.7)
  \rput[r](-0.2,0.7){$\posi[n-1]$}
\psline(-0.1,0.3)(0.1,0.3)\rput[r](-0.2,0.3){$\posi[n]$}
\psline(1,-0.1)(1,0.1)
  \rput[t](1,-0.2){$\vali[n]$}
\psline(2,-0.1)(2,0.1)
  \rput[t](2,-0.2){$\vali[n-1]$}
\rput[t](2.6,-0.2){$\cdots$}
\psline(3.2,-0.1)(3.2,0.1)
  \psline[linestyle=dotted](3.2,0.1)(3.2,1.8)
  \rput[t](3.2,-0.2){$\vali[\magent+1]$}
\psline(5,-0.1)(5,0.1)
  \psline[linestyle=dotted](5,0.1)(5,3)
  \rput[t](5,-0.2){$\vali[\magent]$}
\psline(6,-0.1)(6,0.1)
  \psline[linestyle=dotted](6,0.1)(6,3.7)
  \rput[t](6,-0.2){$\vali[\magent-1]$}
\rput[t](7,-0.2){$\cdots$}
\psline(8,-0.1)(8,0.1)
  \psline[linestyle=dotted](8,0.1)(8,3.7)
  \rput[t](8,-0.2){$\vali[1]$}
\psline[linecolor=red,linewidth=2pt](0,0.3)(1,0.3)(1,0.7)(2,0.7)(2,1.2)(2.4,1.2)(2.4,1.8)(3.2,1.8)(3.2,2.0)(3.5,2.0)
\pscurve[linecolor=red,linewidth=2pt](3.5,2.0)(4,2.7)(5,2.9)
\psline[linecolor=red,linewidth=2pt](5,2.9)(5,3.5)(9,3.5)
\psline[linestyle=none,showpoints=true,linecolor=red,linewidth=2pt](1,0.3)(2,0.7)(2.4,1.2)(3.2,1.8)(5,2.9)(6,3.5)(8,3.5)
\end{pspicture}
\caption{The outcome of the clinching auction is completely specified
  by this figure.  For agents $i < \magent$, the allocation rule
  $\alloci(z,\valsmi)$ of the clinching auction is depicted; the
  payment of these agents is equal to the area of shadded region which
  is equal to the budget $\budget$.  Agents $i \geq \magent$ share
  this allocation rule on $z \leq \vali$ which is the relevant portion
  of the allocation rule for calculating payments.  In the envy-free
  outcome that irones the top $i$ agents (and ignores the budget
  constraint) the payment of the top agent, denoted $\budgeti$, is
  depicted for $i \in \{\magent-1,\magent\}$.  It is clear from the
  picture these are monotone in $i$ and $\budgeti[\magent] < \budget
  \leq \budgeti[\magent-1]$.}
\label{f:clinchingcurve}
\end{center}
\end{figure}
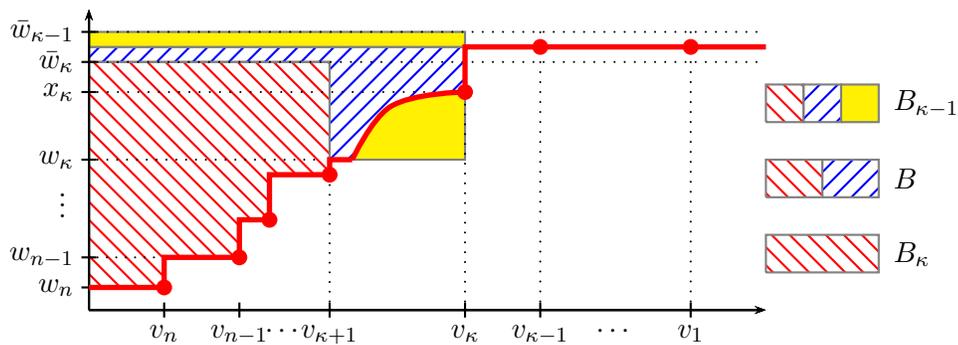

It is easy to see from Theorem~\ref{t:char} that the service
probability of each of the top $\magent-1$ agents is a little less
than the average weight of the top $\magent-1$ positions, and a little
more than the average weight of the top $\magent$ positions.  The
service probability of $\magent$ is between the weight of her
corresponding position and the average weight of the top $\magent$
positions.  In fact, the exact service probabilities (and
correspinding payments) can be precisely calculated.

The execution of the clinching auction can be described by two
phases.  In the first phase, the position weights and values are
binding; in the second phase, the budget is binding but the position
weights are not.  Consider ironing the top $i$ agents and the
associated minimum envy-free payments.  Agents $j \leq i$ are served
with probability $\iposi = (\posi[1] + \ldots + \posi)/i$ and their
payments are $\budgeti = \vali[i+1](\iposi - \posi) +
\sum_{j=i+1}^n\vali[j](\posi[j-1] - \posi[j])$.  The payment
$\budgeti$ is decreasing in $i$ (which is obvious as in Figure~\ref{f:clinchingcurve}) and is zero for $i=n$ and (if the
budget is binding) greater than the budget for $i=1$.  Let $\magent$
be such that $\budgeti[\magent] \geq \budget > \budgeti[\magent+1]$.
In the first phase each of the bottom $n-\magent$ agents will clinch
their corresponding positions.  Let $\budget'$ be the remaining budget
and let $\posi' = \posi - \posi[\magent-1]$ be the remaining weight of
position $i \leq \magent$ that has not been clinched (with average
weight $\iposi[\magent]' = \iposi[\magent] - \posi[\magent-1]$).  In
the next phase, the clinching auction will behave exactly like the
clinching auction for multi-unit envionments: the budget starts to
bind at a price clock at most $\vali[\magent]$ and then the instant
the price-clock exceeds $\vali[\magent]$ the remaining supply is
evenly clinched by the highest $\magent-1$ agents.
Figure~\ref{f:clinchingcurve} depicts the resulting outcome and
Theorem~\ref{t:charrefined} formalizes the observed structure.


\newcommand{\rbudget}{\budget'}
\newcommand{\rsupply}{\supply'}
\newcommand{\rsupplyi}[1][i]{\rsupply_{#1}}
\newcommand{\rcumfeas}{\cumfeas'}
\newcommand{\rcumfeasi}[1][i]{\rcumfeas_{#1}}
\newcommand{\rrbudget}{\rbudget'}
\newcommand{\rrsupply}{\rsupply'}
\newcommand{\rrsupplyi}[1][i]{\rrsupply_{#1}}
\newcommand{\rrrsupply}{\rrsupply'}
\newcommand{\rrrsupplyi}[1][i]{\rrrsupply_{#1}}
\newcommand{\rrcumfeas}{\rcumfeas'}
\newcommand{\rrcumfeasi}[1][i]{\rrcumfeas_{#1}}
\newcommand{\clinchamt}{\Delta}
\newcommand{\newclock}{\rclock}
\newcommand{\rclock}{\clock'}
\newcommand{\rrclock}{\clock''}
\newcommand{\startclock}{\clock}

\newtheorem*{theoremchar}{Theorem \ref{t:charrefined}}
\newcommand{\charrefined}{
For any position environment given by position weights $\poss$ and
budget $\budget$ satisfying $\budgeti[\magent] < \budget \leq
\budgeti[\magent-1]$ for some $\magent$, the polyhydral clinching
auction would allocate with:
\begin{enumerate}[(a)]
 \item $\posi$ to every $i\geq \magent+1$, and
 \item $\magent \iposi[\magent]$ split among
   the top $\magent$ agents evenly except for agent $\magent$ obtaining
   $\delta$ less,
\end{enumerate}
where $\marginal$ is a simple function of $\vali[\magent+1]$; $\vali[\magent]$; the remaining budget, denoted
$\rbudget$; and the unclinched supply, denoted $\rsupplyi[\magent]$, after agent $\magent + 1$ drops out.}

\begin{theorem}
\label{t:charrefined}
\charrefined
\end{theorem}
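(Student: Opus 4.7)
The plan is to use Theorem~\ref{t:char} as a black box to establish the qualitative structure in (a) and (b), and then derive the closed-form expression for $\delta$ by a direct analysis of the second phase of the clinching auction.

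First I would extract the structural content. By parts (d) and (e) of Theorem~\ref{t:char}, agents $\magent+1,\ldots,n$ each clinch exactly their position weight $\posi$, which immediately gives (a). For (b), parts (b) and (c) of that theorem tell us that the top $\magent-1$ agents each receive a common allocation $\alpha$ and agent $\magent$ receives $\alloci[\magent] \geq \posi[\magent]$. Since the clinching auction is Pareto optimal (Proposition~\ref{prop:clinching-pareto-optimal}), no feasible supply is ever left on the table, so the top $\magent$ agents share exactly $\supplyi[n] - \sum_{j>\magent}\posi[j] = \supplyi[\magent] = \magent\iposi[\magent]$. Setting $\delta := \iposi[\magent] - \alloci[\magent]$ then forces $\alpha = \iposi[\magent] + \delta/(\magent-1)$, which is (b) up to identification of $\delta$.

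Next I would pin down $\delta$ by tracing the two-phase execution described just before the theorem. In phase~1 the clock $\clock$ ascends from $0$ past $\vali[n],\ldots,\vali[\magent+1]$; at each instant I would apply the symmetric-position formula for residual supply (total supply minus the maximum symmetric assignment of the other remaining agents to their best positions) to show that an agent about to drop out leaves with exactly her position weight $\posi$, while by symmetry each remaining top-$\magent$ agent has clinched a common amount. Summing the clinching payments and the supply consumed up to the moment $\clock$ first exceeds $\vali[\magent+1]$ yields the opening state of phase~2: each of the top $\magent$ agents has common remaining budget $\rbudget$ and there is unclinched supply $\rsupplyi[\magent]$. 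In phase~2 the feasibility constraint restricted to the top $\magent$ agents is just a symmetric multi-unit constraint with total supply $\rsupplyi[\magent]$, so the auction behaves exactly like the Dobzinski--Lavi--Nisan multi-unit clinching auction on this reduced instance. The clock continues up from $\vali[\magent+1]$ and, by Theorem~\ref{t:char}(a) together with the hypothesis $\budgeti[\magent] < \budget \leq \budgeti[\magent-1]$, terminates the instant it passes $\vali[\magent]$. Integrating the symmetric clinching rate over the interval $[\vali[\magent+1], \vali[\magent]]$ against the common remaining budget $\rbudget$ gives the amount each of the top $\magent-1$ agents clinches in phase~2; subtracting from $\iposi[\magent]$ the amount that agent $\magent$ fails to clinch produces the promised closed-form $\delta = \delta(\vali[\magent+1], \vali[\magent], \rbudget, \rsupplyi[\magent])$.

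The main obstacle will be making the phase-1 tracking rigorous: specifically, inducting on the sequence of drop-outs to show that the symmetric residual-supply formula assigns each departing agent exactly her corresponding position weight $\posi$ and that the top $\magent$ agents clinch a common amount throughout. Both statements are intuitive from the symmetry of the environment and the common budget, but the bookkeeping (how position weights enter the binding cumulative constraint and how the common budget is depleted simultaneously) needs to be done carefully to arrive at the correct $(\rbudget, \rsupplyi[\magent])$. Once phase~1 is settled, phase~2 is a direct calculation in the multi-unit clinching auction, and the closed form for $\delta$ is obtained by equating the total spend in phase~2 to $\rbudget$.
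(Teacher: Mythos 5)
Your plan follows the same overall two-phase decomposition as the paper's proof, and the shortcuts you introduce (leaning on Theorem~\ref{t:char} for the qualitative shape, and on Pareto optimality to argue full allocation of $\magent\iposi[\magent]$ among the top $\magent$ agents) are legitimate and slightly more modular than what the paper does, which is to rebuild the structure directly from the constraint accounting in Lemmas~\ref{l:clinchamt} and~\ref{l:clinchprocess}. However, there are three concrete places where your sketch papers over real work.

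First, you silently identify two different definitions of $\magent$. Theorem~\ref{t:char} defines $\magent$ as ``the highest-valued agent who pays strictly less than the budget,'' which is an endogenous quantity of the clinching process; the hypothesis of Theorem~\ref{t:charrefined} defines $\magent$ via $\budgeti[\magent] < \budget \leq \budgeti[\magent-1]$, which is computable from $\vals$ and $\poss$ alone. The whole point of the closed-form characterization is that these coincide, and the paper devotes its final step to checking exactly this. Your plan quotes the budget-inequality hypothesis and invokes Theorem~\ref{t:char}(a) as if the two $\magent$'s are interchangeable, without arguing why.

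Second, your ``integrate over $[\vali[\magent+1],\vali[\magent]]$'' step misses a case split at the phase boundary. When agent $\magent+1$ drops out, the demand constraint may already be strictly binding at level $\magent-1$, in which case there is an \emph{atom} of clinching at price $\vali[\magent+1]$ itself (the paper's case~\ref{p:binding}), and the correct state $(\rrclock,\rrsupplyi[\magent])$ for starting the gradual clinching phase is not $(\vali[\magent+1],\rsupplyi[\magent])$. Conversely, if the demand constraint is slack, the clock must first ascend with no clinching until demand and supply coincide at level $\magent-1$ (case~\ref{p:nonbinding}). Your integral picks up neither of these corrections, so it would not produce the right $\delta$ as a function of $(\vali[\magent+1],\vali[\magent],\rbudget,\rsupplyi[\magent])$ in general.

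Third, ``direct calculation in the multi-unit clinching auction'' undersells what is needed: the clinching rate itself is determined implicitly by the requirement that the demand constraint stay tight as the clock rises, which is an ODE. The paper solves it in Lemma~\ref{l:clinchprocess}, getting the characteristic $\big(\tfrac{\startclock}{\newclock}\big)^{i}$ decay of remaining supply; this is not an off-the-shelf formula in \citet{DLN-08}. Your plan should at minimum note that this ODE step is where the concrete expression for $\delta$ actually comes from, rather than treating it as bookkeeping.
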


\subsection{Welfare approximation for ordinal environments}

We now show that the clinching auction which (essentially) irons only
the top positions, is a two-approximation to the envy-free optimal
welfare which may come from ironing an arbitrary number of consecutive
positions; moreover, this bound is tight.

\begin{theorem}
For any position environment with common budgets, the welfare obtained
by the clinching auction is a $2$-approximation to the envy-free
optimal welfare. Furthermore, this ratio is tight even for the
single-item environment.
\end{theorem}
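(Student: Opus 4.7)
The plan is to compare both the clinching welfare and the envy-free optimal welfare against a common benchmark, the ``prefix-ironing'' welfare $W(m) = \iposi[m]\sum_{i=1}^m \vali + \sum_{i>m}\vali\posi$, which is what an envy-free allocation earns when the top $m$ positions are ironed equally and the remaining positions go greedily to their natural agents. By Theorem~\ref{t:charrefined}, the clinching allocation distributes $\magent\iposi[\magent]$ among the top $\magent$ agents evenly except that agent $\magent$ receives $\marginal$ less than the others, and gives $\posi$ to every agent $i > \magent$. Since values are indexed in decreasing order and the $\marginal$-adjustment only shifts probability mass from agent $\magent$ to higher-valued agents $1,\ldots,\magent-1$, the clinching welfare is at least $W(\magent)$.

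The crux of the upper bound is to establish $\EFOW(\vals,\budget) \leq 2\,W(\magent)$. I plan to use the LP formulation of envy-free welfare together with summation-by-parts applied to the minimum payment of Lemma~\ref{l:envyfree}: the top agent's payment rewrites as $\vali[2]\alloci[1] - \sum_{j\geq 2}(\vali[j]-\vali[j+1])\alloci[j]$ (with $\vali[n+1]=0$), so the budget constraint $\paymenti[1]\leq\budget$ directly bounds $\alloci[1]$ in terms of the lower-ranked allocations. Combined with swap-monotonicity and the cumulative-feasibility constraint $\sum_{i\leq k}\alloci \leq \sum_{i\leq k}\posi$, this limits how concentrated an envy-free allocation can be on the top agents. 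The principal obstacle is that the envy-free optimum need not iron a prefix; it can, for instance, iron the interval $[2,n-1]$ while treating agent~$1$ separately, giving agent~$1$ strictly more allocation than the symmetric clinching outcome permits. I plan to handle this by showing that the extra mass granted to agent~$1$ in any such non-prefix ironing comes at a proportionate cost to the lower-ranked agents' welfare, so that the total welfare stays within factor~$2$ of $W(\magent)$. Chaining yields $\EFOW(\vals,\budget)\leq 2\,W(\magent)\leq 2\cdot\Clinching(\vals,\budget)$.

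For tightness, consider the single-item environment ($\posi[1]=1$, $\posi=0$ for $i\geq 2$) with $n$ agents, $\vali[1]=V$ very large, $\vali[2]=\cdots=\vali[n-1]=c$, $\vali[n]=0$, and budget $\budget = c/(n-2)$. Theorem~\ref{t:charrefined} then gives $\magent=n-1$ and, after solving for $\marginal$, forces agent~$\magent$'s allocation to zero, so clinching effectively flat-irons the top $n-2$ agents, achieving $\alloci[1]=1/(n-2)$ and welfare approximately $V/(n-2)$. The envy-free optimum instead takes the convex combination of the ``only agent~$1$ served'' allocation (top-agent payment $c$) and the ``top $n-1$ ironed'' allocation (top-agent payment $0$), mixed so that the expected top-agent payment equals $\budget$; this yields $\alloci[1]=2/(n-1)$ and welfare approximately $2V/(n-1)$. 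Hence $\EFOW/\Clinching \to 2(n-2)/(n-1) \to 2$ as $n\to\infty$, matching the upper bound.
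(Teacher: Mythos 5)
Your lower bound on the clinching welfare, $\Clinching(\vals) \geq W(\magent) = \iposi[\magent]\sum_{i\leq\magent}\vali + \sum_{i>\magent}\vali\posi$, is exactly the paper's inequality~\eqref{eq:clinching-lb} and is correct. The overall strategy (compare both quantities to a prefix-ironing benchmark at level $\magent$) is also the paper's. But the upper bound, which is the heart of the theorem, is not actually proved. You correctly identify the obstacle --- the envy-free optimum need not iron a prefix, so it may give agent~$1$ much more than $\iposi[\magent]$ --- and then write ``I plan to handle this by showing that the extra mass granted to agent~1 \dots comes at a proportionate cost,'' which is a restatement of the goal, not an argument. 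The summation-by-parts identity $\paymenti[1] = \vali[2]\alloci[1] - \sum_{j\geq 2}(\vali[j]-\vali[j+1])\alloci[j]$ is correct but by itself does not give a usable bound on $\alloci[1]$: you still need to control the subtracted sum, and you have not said how.

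The paper closes the gap with a short chain of two budget bounds, one from the clinching auction and one from the envy-free outcome, both pivoting on the common price scale $\vali[\magent]$. Because the clinching auction stops just after the price clock crosses $\vali[\magent]$ and the top $\magent-1$ agents each clinch slightly less than $\iposi[\magent-1]$ units at per-unit price at most $\vali[\magent]$, the budget satisfies $\budget \leq \vali[\magent]\iposi[\magent-1]$. On the other side, the minimum envy-free payment of the top agent in the envy-free optimum is $\sum_{i\geq 2}(\alloci[i-1]-\alloci)\vali \geq \vali[\magent](\alloci[1]-\alloci[\magent])$, and this must be at most $\budget$. Combining gives $\alloci[1] \leq \alloci[\magent] + \iposi[\magent-1]$; adding $(\magent-1)$ copies of this to the feasibility bound $\alloci[1] \leq \magent\iposi[\magent] - (\magent-1)\alloci[\magent]$ yields the clean constraint $\alloci[1] \leq 2\iposi[\magent]$. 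At that point one simply relaxes the budget constraint to a per-agent cap of $2\iposi[\magent]$: since the relaxed optimum need not shift any probability to agents below position $\magent$, its welfare is at most $\sum_{i\leq\magent}2\vali\iposi[\magent] + \sum_{i>\magent}\vali\posi \leq 2\Clinching(\vals)$. This is the missing step in your proposal; without a bound of this form on $\alloci[1]$, your non-prefix-ironing worry is not resolved. I would also flag that your tightness instance differs from the paper's (they use a single item with $N+1$ agents, a very high top value, $N-1$ equal middle values, a near-tie at the bottom, and budget~$1$); before relying on yours you would need to check that $\magent$ in Theorem~\ref{t:charrefined} really is $n-1$ with your parameters and that agent $\magent$ ends up with zero allocation --- the boundary behavior at $\vali[n]=0$ with budget exactly $c/(n-2)$ is a degenerate point of the clinching process and is not obviously what you claim.
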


\begin{figure}[!ht]
\begin{center}
\psset{unit=0.7cm} \subfigure[The upper bound \eqref{eq:budget-ub} is
  depicted pictorially. In the clinching auction, the payment of the
  highest valued agent (cross-hatched) is equal to the budget and at
  least the rectangle (striped) whose area is {$\vali[\magent]
    \iposi[\magent-1]$}.]{
\begin{pspicture}(-1,-0.5)(9,4.5)
\psline[linestyle=none,hatchcolor=gray,fillstyle=vlines,hatchangle=45](0,0)(5,0)(5,3.7)(0,3.7)
\psaxes[labels=none,ticks=none]{->}(9,4.5)
\psline(-.1,3.7)(.1,3.7)\psline[linestyle=dotted](.1,3.7)(8,3.7)\rput[r](-.2,3.7){$\iposi[\magent-1]$}
\psline(5,-0.1)(5,0.1)\psline[linestyle=dotted](5,0.1)(5,3.7)\rput[t](5,-0.2){$\vali[\magent]$}
\psline(6,-0.1)(6,0.1)\psline[linestyle=dotted](6,0.1)(6,3.7)\rput[t](6,-0.2){$\vali[\magent-1]$}
\rput[t](7,-0.2){$\cdots$}
\psline(8,-0.1)(8,0.1)\psline[linestyle=dotted](8,0.1)(8,3.7)\rput[t](8,-0.2){$\vali[1]$}
\psline[linecolor=red,linewidth=2pt](0,0.3)(1,0.3)(1,0.7)(2,0.7)(2,1.2)(2.4,1.2)(2.4,1.8)(3.2,1.8)(3.2,2.0)(3.5,2.0)
\pscurve[linecolor=red,linewidth=2pt](3.5,2.0)(4,2.7)(5,3)
\psline[linecolor=red,linewidth=2pt](5,3)(5,3.5)(9,3.5)
\psline[linestyle=none,showpoints=true,linecolor=red,linewidth=2pt](1,0.3)(2,0.7)(2.4,1.2)(3.2,1.8)(5,3)(6,3.5)(8,3.5)
\pscustom[linestyle=solid,linecolor=red,hatchcolor=red,fillstyle=vlines,hatchangle=135]{
\psline(0,0.3)(1,0.3)(1,0.7)(2,0.7)(2,1.2)(2.4,1.2)(2.4,1.8)(3.2,1.8)(3.2,2.0)(3.5,2.0)
\pscurve(3.5,2.0)(4,2.7)(5,3)
\psline(5,3)(5,3.5)(0,3.5)}
\end{pspicture}
}
$\qquad$
\subfigure[The lower bound \eqref{eq:budget-lb} is depicted
  pictorially.  The payment of the highest valued agent (striped) is
  equal to the budget and at least the rectangle (cross-hatched) whose
  area is {$\vali[\magent](\alloci[1]- \alloci[\magent])$}.]{
\begin{pspicture}(-1,-0.5)(9,4.5)
\psline[linestyle=none,hatchcolor=gray,fillstyle=vlines,hatchangle=45](0,2.4)(5,2.4)(5,3.9)(0,3.9)
\psaxes[labels=none,ticks=none]{->}(9,4.5)
\psline(-.1,3.9)(.1,3.9)\rput[r](-.2,3.9){$\alloci[1]$}
\psline(-.1,2.4)(.1,2.4)\rput[r](-.2,2.4){$\alloci[j]$}
\psline(5,-0.1)(5,0.1)\psline[linestyle=dotted](5,0.1)(5,3.7)\rput[t](5,-0.2){$\vali[\magent]$}
\psline(6,-0.1)(6,0.1)\psline[linestyle=dotted](6,0.1)(6,3.7)\rput[t](6,-0.2){$\vali[\magent-1]$}
\rput[t](7,-0.2){$\cdots$}
\psline(8,-0.1)(8,0.1)\psline[linestyle=dotted](8,0.1)(8,3.7)\rput[t](8,-0.2){$\vali[1]$}
\psline[linecolor=blue,linewidth=2pt](0,1.5)(2.0,1.5)(3,1.5)(3,2.4)(5,2.4)(5,3)(6,3)(6,3.9)(8,3.9)
\psline[linestyle=none,showpoints=true,linecolor=blue,linewidth=2pt](3,1.5)(5,2.4)(6,3)(8,3.9)
\psline[linestyle=solid,linecolor=blue,hatchcolor=blue,fillstyle=vlines,hatchangle=135](0,1.5)(2.0,1.5)(3,1.5)(3,2.4)(5,2.4)(5,3)(6,3)(6,3.9)(0,3.9)
\end{pspicture}
}
\caption{Proofs by picture of the upper and lower bounds on the budget $\budget$.}
\label{f:Bbounds}
\end{center}
\end{figure}
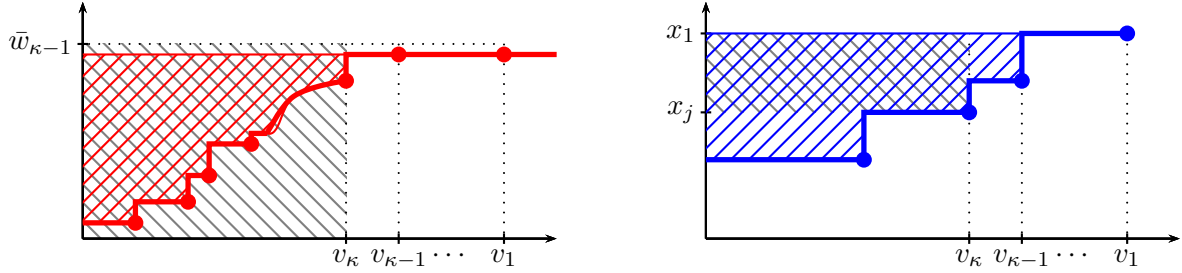

\begin{proof}
Let $\magent$ be the highest-valued agent who does not pay the budget
in the clinching auction.  Recall from Theorem~\ref{t:char} that,
relative to the outcome of the clinching auction, if we iron the top
$\magent$ agents (to get average service probability $\iposi[\magent]
= \sum_{i \leq \magent} \posi / \magent$) then agent $\magent$ gets
slightly more service probability at the expense of lowering the
service probability of the top $\magent-1$ agents; overall there is a
net decrease in welfare.
\begin{align}
\label{eq:clinching-lb}
\Clinching(\vals) \geq \sum_{i=1}^{\magent}\vali\iposi[\magent] +
      \sum_{i=\magent+1}^{n}
\vali\posi
\end{align}
Let $\allocs$ be the optimal envy-free allocation.  We know two things
about $\allocs$.  First, it is feasible, which means, in particular,
that $\sum_{i \leq \magent} \alloci \leq \magent \iposi[\magent]$,
i.e., the cumulative allocaiton is at most the cumulative supply.
Second, the payment of the highest-valued agent, i.e., $\paymenti[1]$,
(which is given by the ``area above the allocation rule'' as specified
by the mininimum envy-free payment identity of Lemma~\ref{l:envyfree})
is at most the budget.  We use these two bounds to show that
$\alloci[1] \leq 2\iposi[\magent]$.

The clinching auction ends when the price-clock just exceeds
$\vali[\magent]$, consequently the per-unit cost of service is bounded
by $\vali[\magent]$.  The probability of service clinched by the top
$\magent-1$ agents is slightly lower than $\iposi[\magent-1] =
\frac{1}{\magent-1} \sum_{i < \magent} \posi$.  Therefore an upper
bound on the maximum payment (and therefore the budget) is:
\begin{align}\label{eq:budget-ub}
\vali[\magent] \iposi[\magent-1] &\geq \budget.\\ 
  \intertext{In the
  envy-free optimal outcome the payment of the top agent
  (and therefore the budget) is at least:} 
\label{eq:budget-lb}
\budget 
  & = \sum_{i=2}^{n}(\alloci[i-1]-\alloci)\vali \geq \vali[\magent]
  (\alloci[1]-\alloci[\magent]).\\ 
    \intertext{The bounds \eqref{eq:budget-ub} and \eqref{eq:budget-lb} combine to
    give a bound on the probability of service $\alloci[1]$ of the
    top agent (and thus any agent) in the envy-free
    outcome.}
\label{eq:pos1a}
\alloci[1] &\leq \alloci[\magent] + \iposi[\magent-1].
\end{align}

The feasibility constraint of the position environment restricts the
envy-free outcome so that
\begin{align}\notag
\magent \iposi[\magent] & \geq \sum\nolimits_{i \leq \magent} \alloci 
\geq \alloci[1] + (\magent - 1) \alloci[\magent].\\
\intertext{Solving for $\alloci[1]$ we get a second upper bound.}
\label{eq:pos1b}
\alloci[1] & \leq \magent \iposi[\magent] - (\magent-1)\alloci[\magent].\\
\intertext{Add $(\magent-1)$ times \eqref{eq:pos1a} to \eqref{eq:pos1b} to
  get:} 
\magent \alloci[1] & \leq \magent \iposi[\magent] + (\magent-1)\iposi[\magent-1].
\end{align}
We conclude that $\alloci[1] \leq 2 \iposi[\magent]$ as desired.

For the optimal envy-free welfare problem, if the budget constraint is
replaced by the weaker constraint of $\alloci\leq 2\iposi[\magent]$,
the welfare can only get better. Furthermore, the optimal allocation
for this relaxed problem would shift as little service probability
down from top slots to lower slots as possible so as to meet the
allocation constraint that $\alloci \leq 2 \iposi[\magent]$.  As the
average weight of the top $\magent$ positions is $\iposi[\magent]$ the
probability of service for agent $\magent$ (which is the least of the
top agents) can only be at most the average.  Therefore, no additional
weight is shifted down to lower agents $j > \magent$ so,
\begin{align*}
\EFO(\vals) \leq \sum_{i=1}^{\magent} 2\vali\iposi[\magent] +
           \sum_{i=\magent+1}^n \vali\posi
      \leq 2\Clinching(\vals),
\end{align*}
where the last inequality follows from \eqref{eq:clinching-lb}.

To show that the 2-approximation is tight, consider the following
single-item scenario with a common budget of $\budget = 1$.  There are
$N+1$ agents; the highest valuation is $N^3$, the middle $N-1$
valuations are $N$, and the last valuation is $N-\epsilon$ where
$\epsilon$ is a small positive number.

The welfare-optimal envy-free allocation would serve the bottom $N$
agents with equal probability $\alloci[L]$ and the top agent with
probability $\alloc_H>\alloc_L$.  By optimizing the welfare
$N^3\alloc_H+N^2\alloc_L$ with the budget constraint
$N(\alloc_H-\alloc_L)\leq 1$, and the supply constraint
$\alloci[H]+N\alloci[L]\leq 1$, we have $\alloci[H]=\tfrac{2}{N+1}$
while $\alloci[L]=\tfrac{N-1}{N(N+1)}$. Thus the optimal envy-free
welfare for this case is $2N^2-N$.

The clinching auction would not let anybody clinch as long as the
price clock is below $N$ since there are $N+1$ agents who demand at
least $\tfrac{1}{N}$ while we only have $1$ item. However, as soon as
the price-clock reaches $N$, the bottom agent drops out, and we have
$N$ agents left who demand $\tfrac{1}{N}$ each. Thus, each of the top
$N$ agents would receive exactly $\tfrac{1}{N}$ and pay the budget.
Thus the welfare for clinching in this case is $N^2+N-1$.

In the limit as $N$ approaches $\infty$, the ratio between the two
welfares approaches $2$.
\end{proof}

\section{Revenue approximation for agents with a common budget}
\label{s:revenue}

The main approaches to prior-free auctions for digital goods
generalize to symmetric cardinal environments (without budgets).
\citet{HY-11} generalized the random sampling auction, and
\citet{HH-12} generalized the consensus estimate profit extraction
auction.  In this section, we generalize the random sampling profit
extraction auction of \citet{FGHK-02} for digital good environments
to symmetric cardinal environments with a common budget.  The random
sampling profit extraction auction splits the agents into a market
and a sample, estimates the optimal profit from the sample, and then
attempts to extract that profit from the market.

A profit extractor is a mechanism that is given some extra
information and, if that information is correct, is able to extract a
corresponding profit.  For symmetric cardinal environments,
\citet{HH-12} gave a profit extractor that is parameterized by an
estimated valuation profile and is able to extract profit of at
least the envy-free optimal revenue for the estimated valuation
profile when that estimate is a coordinate-wise lower bound on the
true valuation profile.\footnote{The \citet{HH-12} profit extractor
is described in Definition~\ref{def:PER} of
Appendix~\ref{s:nobudget}.}  Our profit extractor below is a
simplification of that of~\citet{HH-12} generalized to the case where
agents have budgets.

\begin{definition} 
\label{def:PEC}
The {\em clinching profit extractor}, $\PE{\evals}$, is parameterized
by non-increasing valuation profile $\evals$.  It calculates the
optimal envy-free outcome $\eallocs$ for $\evals$ and then runs the
clinching auction for position weights $\eallocs$ on the true
valuation profile $\vals$.
\end{definition}

Assume that $\vals$ and $\evals$ are in non-increasing order.  Define
$\vals$ as {\em one-ahead after index $\domind$} for $\evals$ if
$\domind$ is the lowest index for which all $i > \domind$ satisfy
$\vali[i+1] \geq \evali$.  When $\domind = 0$ define $\vals$ as
{\em one-ahead dominating} $\evals$, denoted $\valsmi[1] \geq
\evals$.  The following lemma shows that the clinching profit
extractor on $\vals$ is able to obtain the contribution to the
optimal envy-free revenue for $\evals$ from agents $\{\domind+1,\ldots,n\}$.
 
\begin{lemma}
\label{l:PEC}
If $\vals$ one-ahead dominates $\evals$ then the clinching profit
extractor revenue is at least the estimated envy-free optimal
revenue, i.e., $\PE{\evals}(\vals) \geq \EFO(\evals)$; moreover, if
$\vals$ is one-ahead after index $\domind$ for $\evals$ then the
contribution to the profit extractor revenue from agent $i >
\domind$ is at least the contribution to the estimated envy-free
optimal revenue from $i$, i.e., $\PE{\evals}_i(\vals) \geq
\EFO_i(\evals)$.
\end{lemma}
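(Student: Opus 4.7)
The plan is to prove the stronger per-agent statement of Part~2, which gives Part~1 by summation in the $\domind = 0$ case.  Fix $i > \domind$ and let $(\allocs, \payments)$ be the outcome of the clinching auction run on position weights $\eallocs$ with true values $\vals$.  Theorem~\ref{t:char} supplies a critical index $\magent$ such that $\paymenti[j] = \budget$ for $j < \magent$, $\alloci[\magent] \geq \ealloci[\magent]$, $\alloci[j] = \ealloci[j]$ for $j > \magent$, and the whole outcome is envy-free; I will use this structure to compare $\paymenti$ to $\tilde{p}_i$ based on where $i$ sits relative to $\magent$.

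The case $i < \magent$ is immediate: the profit extractor charges agent $i$ the full budget, $\paymenti = \budget$, while the EFO of $\evals$ is budget-respecting, so $\tilde{p}_i \leq \budget$.  For $i \geq \magent$, envy-freeness of the clinching outcome and Lemma~\ref{l:envyfree} give $\paymenti \geq \paymenti^{\min} = \sum_{j > i} \vali[j] (\alloci[j-1] - \alloci[j])$.  Using Theorem~\ref{t:char} I claim that $\alloci[j-1] - \alloci[j] \geq \ealloci[j-1] - \ealloci[j]$ for every $j > i$: when $j - 1 > \magent$ both sides are equal (the two allocations agree past $\magent$), and when $j = \magent + 1$ the bound uses $\alloci[\magent] \geq \ealloci[\magent]$ combined with $\alloci[\magent+1] = \ealloci[\magent+1]$.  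Since values are non-negative this yields $\paymenti \geq \sum_{j > i} \vali[j] (\ealloci[j-1] - \ealloci[j])$.

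The final step reindexes by $k = j - 1$, rewriting the bound as $\sum_{k \geq i} \vali[k+1] (\ealloci[k] - \ealloci[k+1])$, and then applies the one-ahead hypothesis $\vali[k+1] \geq \evali[k]$, which is valid termwise for every $k \geq i > \domind$.  The resulting sum is exactly $\sum_{k \geq i} \evali[k] (\ealloci[k] - \ealloci[k+1])$, which by Lemma~\ref{l:envyfree} is the envy-free maximum payment of agent $i$ under $\evals$, hence an upper bound on $\tilde{p}_i$; this closes the per-agent comparison $\paymenti \geq \tilde{p}_i$, and summing when $\domind = 0$ delivers $\PE{\evals}(\vals) \geq \EFO(\evals)$.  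The conceptual crux is the alignment between the one-step index shift built into the envy-free minimum-payment identity and the one-ahead shift of the hypothesis, so that the clinching lower bound collapses exactly onto the envy-free maximum for the estimate; the main technical subtlety is the ironed top agent $\magent$, whose deviation $\alloci[\magent] \geq \ealloci[\magent]$ from her position weight only strengthens the bound rather than weakening it.
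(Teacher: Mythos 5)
Your proof is correct and follows essentially the same route as the paper's: lower-bound the profit extractor's charge to agent $i$ by the minimum envy-free payment under the clinching allocation $\allocs$, pass from $\allocs$ to $\eallocs$ using the structural facts from Theorem~\ref{t:char} (they coincide beyond $\magent$ and $\alloci[\magent]\geq\ealloci[\magent]$), then apply the one-ahead shift $\vali[k+1]\geq\evali[k]$ to arrive at the maximum envy-free payment under $\evals$. Your explicit split into $i<\magent$ (where the budget caps both sides) versus $i\geq\magent$ is the same dichotomy the paper phrases as ``pays her budget'' versus ``pays strictly less than her budget,'' and your termwise inequality $\alloci[j-1]-\alloci[j]\geq\ealloci[j-1]-\ealloci[j]$ is just a more granular statement of the paper's observation that $\alloci$ is the only coordinate that changes and it appears with a nonnegative coefficient.
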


\begin{proof}
We will prove the second part of the lemma which implies the first.
Consider an $i > \domind$.  The maximum $i$ could pay in any outcome
is the budget, so if $i$ pays her budget in the clinching auction then
the bound holds.  Suppose instead that $i$ pays strictly less than her
budget in the clinching auction.  Consider the following sequence of
inequalities with explanation below (where $\allocs$ is the allocation
of the clinching auction and $\eallocs$ is the envy-free optimal
outcome for $\evals$).
\begin{align*}
\PE{\evals}_i(\vals)
  &\geq \sum_{j=i+1}^{n}(\alloci[j-1]-\alloci[j])\vali[j]\\ 
  &\geq \sum_{j=i+1}^{n}(\ealloci[j-1]-\ealloci[j])\vali[j]\\
  &\geq \sum_{j=i+1}^{n}(\ealloci[j-1]-\ealloci[j])\evali[j-1]\\
  &= \EFO_i(\evals).
\end{align*} 
The first inequality follows from envy freedom of the clinching auction
and the formula for minimum envy-free payments
(Lemma~\ref{l:envyfree}).  For $j > i$, Theorem~\ref{t:char} implies
that $\alloci[j] = \ealloci[j]$ because all but the highest-valued
agent who does not pay her budget are allocated with exactly their
corresponding position weight; the theorem also implies that $\alloci
\geq \ealloci[i]$ as agent $i$ also does not pay her budget (but she
might be the highest such agent).  The second equality then follows as
the service probabilities are unaffected by the swap from $\alloci[j]$
to $\ealloci[j]$ except for $\alloci$ which only appears positively and
is at least $\ealloci$.  The third inequality comes from the fact that
$i$ is greater than $\domind$ so one-ahead dominance holds at $i$ and
higher indices.  The final equality follows from the formula for
maximum envy-free payments (Lemma~\ref{l:envyfree}).
\end{proof}

We now define a simple biased sampling procedure and show that it
ensures one ahead dominance with significant probability.

\begin{definition}[Biased Sampling]
Parameterized by a probability $\coin$, the {\em biased sampling} process
assigns each agent into the sample $\sample$ independently with probability
$\coin$, otherwise the market $\market$.
\end{definition}

Let $\mals$ and $\sals$ be the sorted valuation vectors of $\market$
and $\sample$ respectively, and assume that $\mals$ and $\sals$ are
padded with $0$'s to be equal in length for comparison convenience.
The biased sampling process has the following probabilistic properties
(proof given at the end of the section).

\begin{lemma}
\label{l:dominanceprob}
For the biased sampling process with $\coin<0.5$ and $\domind$ being a
random variable for the index after which $\mals$ is one-ahead for
$\sals$,
\begin{enumerate}[(a)]
\item $\prob{\mals\not\geq\sals}\leq\tfrac{\coin}{1-\coin}$,
\item\label{l:nondominanceprob}$\prob{\mals\not\geq\sals\given1\in\market}\leq
       \big(\tfrac{\coin}{1-\coin}\big)^2$, and
\item $\sum_{i=1}^{n}i\prob{\domi[i]\given1\in\market}\leq
       \tfrac{\coin}{(1-2\coin)^2}$.
\end{enumerate}
Furthermore, all inequalities are tight in the limit as $n$ approaches $\infty$.
\end{lemma}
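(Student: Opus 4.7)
The plan is to view the biased sampling process as an asymmetric random walk on $\mathbb{Z}$ and invoke standard first-passage theory. Sort the $n$ agents in decreasing value and independently label each $\sample$ (with probability $\coin$) or $\market$ (with probability $1-\coin$). Let $W_k$ denote the excess of $\market$-labels over $\sample$-labels among the top $k$ agents, so that $W_0 = 0$ and $W$ is a random walk with step $+1$ (probability $1-\coin$) or $-1$ (probability $\coin$) and positive drift $1-2\coin$. The key translation is that classical dominance $\mali \geq \sali$ for all $i$ is equivalent to $W$ never reaching $-1$: at the time $T_i$ of the $i$-th down-step one has $\mali \geq \sali$ iff $W_{T_i} \geq 0$, and since $W$ descends below $0$ only at a down-step, failure of dominance is equivalent to $W$ reaching $-1$. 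By the classical gambler's-ruin formula this has probability $\tfrac{\coin}{1-\coin}$ in the infinite walk; truncating to $n$ steps can only decrease this, proving (a). For (b), conditioning on $1 \in \market$ forces $W_1 = +1$, and by the strong Markov property reaching $-1$ decomposes into independent first passages from $+1$ to $0$ and from $0$ to $-1$, each of probability $\tfrac{\coin}{1-\coin}$.

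For part (c) the first task is to establish the identity $\domind = S_\tau$, where $\tau = \sup\{k \geq 0 : W_k \leq 0\}$ and $S_\tau$ is the number of down-steps taken by time $\tau$. One-ahead dominance fails at index $i$ iff $W_{T_i} \leq 0$; after $\tau$ the walker stays strictly positive, so later down-steps depart from $W \geq 2$ (else $\leq 0$ would recur) and are not failures. The $(S_\tau)$-th down-step is itself a failure: if $\tau$ is a down-step then $W_{T_{S_\tau}} = W_\tau \leq 0$; if $\tau$ is an up-step then necessarily $W_\tau = 0$, only up-steps separate the $(S_\tau)$-th down-step from $\tau$, and so $W_{T_{S_\tau}} = W_\tau - (\tau - T_{S_\tau}) \leq -1$. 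Given this identity, $E[\domind \mid 1 \in \market]$ equals $E[S_\tau \mid W_1 = 1]$, a standard excursion computation. Conditioned on $1 \in \market$ the walker at $+1$ fails to return to $0$ with probability $1 - \tfrac{\coin}{1-\coin}$ (contributing $0$), and returns with probability $\tfrac{\coin}{1-\coin}$, after which it behaves as a fresh walker at $0$ that makes a $\text{Geom}(1-2\coin)$-distributed number of additional finite excursions (mean $\tfrac{2\coin}{1-2\coin}$) before escaping to $+\infty$. Both the initial passage from $+1$ to $0$ and each subsequent finite excursion contribute $\tfrac{1-\coin}{1-2\coin}$ down-steps in expectation, derivable from the first-passage generating function $\phi(z) = \coin z + (1-\coin) z\phi(z)^2$. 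Combining factors,
$E[\domind \mid 1 \in \market] = \tfrac{\coin}{1-\coin} \cdot \tfrac{1-\coin}{1-2\coin} \cdot \left(1 + \tfrac{2\coin}{1-2\coin}\right) = \tfrac{\coin}{(1-2\coin)^2}$.

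The main obstacle is the identity $\domind = S_\tau$, particularly the sub-case where $\tau$ is an up-step: there one must verify $W_\tau = 0$ exactly (a strictly negative value would allow further visits to the non-positive half-line) and that only up-steps separate the $(S_\tau)$-th down-step from $\tau$, in order to pin that down-step at $W \leq -1$. Once the identity is in hand, the remainder is bookkeeping with asymmetric random-walk generating functions. All three inequalities hold in the finite-$n$ setting because every failure in the finite walk persists in its infinite extension, so the infinite-walk formulas derived above serve as upper bounds; tightness as $n \to \infty$ follows because each formula is an exact equality for the infinite walk.
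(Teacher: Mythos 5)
Your parts (a) and (b) are essentially identical to the paper's: both set up the $\pm 1$ random walk $W_k$ tracking the excess of market over sample labels among the top $k$ agents, translate failure of dominance to ruin, and invoke the gambler's-ruin formula $r = \coin + (1-\coin)r^2$ yielding $r = \coin/(1-\coin)$, with squaring when $W_1 = 1$.

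For part (c) your route is genuinely different and arguably cleaner. The paper computes $\prob{\domi[i]\given 1\in\market}$ directly by a combinatorial path count: it characterizes $\{\domind = i\}$ as ``$W_{2i}=0$, step $2i{+}1$ is up, and $W$ never returns to $0$ afterward,'' giving $\binom{2i-1}{i}\coin^i(1-\coin)^{i-1}\cdot(1-\coin)\cdot\bigl(1-\tfrac{\coin}{1-\coin}\bigr) = \binom{2i}{i}\bigl[\coin(1-\coin)\bigr]^i\tfrac{1-2\coin}{2(1-\coin)}$, and then sums $\sum_i i\binom{2i}{i}z^i$ via a Taylor-series identity. You instead prove the pathwise identity $\domind = S_\tau$ (number of down-steps at the last visit to the non-positive half-line) and compute $\expect{S_\tau \given W_1=1}$ by a Wald-type excursion decomposition: the initial passage from $1$ to $0$ plus a $\mathrm{Geom}(1-2\coin)$ number of further finite excursions from $0$, each contributing $\tfrac{1-\coin}{1-2\coin}$ expected down-steps. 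Your identity $\domind = S_\tau$ is correct (I checked both the down-step and up-step sub-cases for $\tau$, including the fact that $W_\tau = 0$ is forced), and the expectation bookkeeping matches, giving $\tfrac{\coin}{(1-2\coin)^2}$ as required. What the paper's approach buys is the exact per-$i$ probability $\prob{\domi[i]\given 1\in\market}$, which could be of independent use; what yours buys is a shorter argument that sidesteps the binomial sum and generating-function identity entirely, replacing it with linearity of expectation over excursions. Both approaches share the same treatment of the finite-$n$ truncation (ruin/failure indices in the finite walk are a subset of those in the infinite extension, so the infinite-walk quantities are upper bounds, tight as $n\to\infty$).
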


\begin{lemma}
\label{l:randomselection}
The optimal envy-free revenue of a random sample $\sample$ whose
elements are selected i.i.d.~with probability $\coin$ satisfies
$\expect{\EFO(\sals)}\geq\coin\EFO(\vals)$.
\end{lemma}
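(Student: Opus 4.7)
The plan is to exhibit, for each realized sample $\sample$, a concrete envy-free, individually-rational, budget-respecting outcome on $\sals$ whose revenue in expectation equals $\coin\cdot\EFO(\vals)$. Since $\EFO(\sals)$ upper-bounds the revenue of any such candidate outcome on $\sample$, the lemma follows by taking expectations.

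Let $(\allocs^*,\payments^*)$ be a revenue-optimal envy-free, IR, BR outcome on $\vals$, so that $\sum_i \paymenti^* = \EFO(\vals)$. For each realization of $\sample$ I would define the candidate outcome on $\sals$ by inheritance: every agent $i\in\sample$ receives the allocation $\alloci^*$ and pays $\paymenti^*$ that she received in the original outcome on all of $\vals$, and agents outside $\sample$ are discarded. The key checks that this outcome is a valid candidate for the envy-free benchmark on $\sals$ are: (i) feasibility, which holds because the restricted allocation is obtained from $\allocs^*$ by zeroing the coordinates of agents outside $\sample$ and $\feasibles$ is downward closed; (ii) envy-freeness on $\sample$, which is a strict subset of the envy-free inequalities $\vali\alloci^* - \paymenti^* \geq \vali\alloci[j]^* - \paymenti[j]^*$ already satisfied on all of $\vals$; and (iii) IR and BR on $\sample$, which are per-agent properties inherited directly from $(\allocs^*,\payments^*)$.

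The revenue of this candidate outcome on a given realization is $\sum_{i\in\sample}\paymenti^*$. By linearity of expectation and independence of the sampling, $\expect{\sum_{i\in\sample}\paymenti^*} = \coin\sum_i\paymenti^* = \coin\cdot\EFO(\vals)$. Combining with the pointwise inequality $\EFO(\sals) \geq \sum_{i\in\sample}\paymenti^*$ and taking expectations over $\sample$ gives the lemma. There is no real technical obstacle here; the argument rests entirely on the observation that envy-freeness, downward-closed feasibility, IR, and BR are all preserved under restriction to any subset of agents, so the optimum on $\vals$ provides a feasible candidate on every realized sample.
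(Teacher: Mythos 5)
Your proposal is correct and follows essentially the same argument as the paper: take the revenue-optimal envy-free outcome on $\vals$, restrict it to the agents in $\sample$, observe that the restriction remains envy-free (and feasible, IR, BR), so $\EFO(\sals)$ is at least the contribution to $\EFO(\vals)$ from agents in $\sample$, and then take expectations. The paper's proof is terser and does not spell out the feasibility/IR/BR checks, but the decomposition and the use of linearity of expectation are identical.
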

\begin{proof}
Consider the envy-free optimal outcome for $\vals$.  Clearly if we
restrict attention only to the agents in $\sample$ there is still no
envy.  Therefore, $\EFO(\sals) \geq \EFO_\sample(\vals)$ where
$\EFO_\sample(\vals)$ is a short-hand notation for the contribution
from the agents in $\sample$ to the envy-free optimal revenue on
$\vals$.  Of course, $\expect{\EFO_\sample(\vals)}=\coin\EFO(\vals)$.
\end{proof}

\begin{definition}[$\BSPE{\coin}$]
\label{mech:BSPE}
 The \emph{biased sampling profit extraction} auction parameterized
 by $\coin < 0.5$ for a common budget $\budget$ works as follows.
 \begin{enumerate}
 \item Partition the set of agents into $\mals$ and $\sals$ using
  biased sampling parameterized by $\coin$.
 \item Run the clinching profit extractor parameterized by $\sals$ on
   $\mals$ and budget $\budget$.
\end{enumerate}
\end{definition}

Incentive compatibility of $\BSPE{\coin}$ comes straight from that
of the profit extractor. We have the following revenue guarantee.

\begin{lemma}
\label{l:approx:BSPE}
For all $p<0.5$ the revenue of $\BSPE{\coin}$ satisfies,
\begin{align*}
\BSPE{\coin}(\vals)\geq(1-\coin)\coin\EFO(\valsmi[1])-\tfrac{\coin(1-\coin)}{(1-2\coin)^2}\EFO(\vali[2]).
\end{align*}
\end{lemma}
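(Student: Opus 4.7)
The plan is to condition on the event that agent $1$ (the highest-valued) lands in the market. Since $\BSPE{\coin}$ is individually rational, the revenue is nonnegative when $1\in\sample$, so
\[
\BSPE{\coin}(\vals) \;\geq\; (1-\coin)\,\expect{\PE{\sals}(\mals)\given 1\in\market}.
\]
Conditional on $1\in\market$, the sample $\sals$ is a biased sample of $\valsmi[1]$ with inclusion probability $\coin$, and every coordinate of $\sals$ satisfies $\sali\leq\vali[2]$.

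Let $\domind$ be the (random) index after which $\mals$ is one-ahead for $\sals$. Lemma~\ref{l:PEC} gives $\PE{\sals}_i(\mals)\geq\EFO_i(\sals)$ for each $i>\domind$, while individual rationality gives $\PE{\sals}_i(\mals)\geq 0$ for $i\leq\domind$; summing yields
\[
\PE{\sals}(\mals) \;\geq\; \EFO(\sals) \;-\; \sum\nolimits_{i=1}^{\domind}\EFO_i(\sals).
\]
For the positive term, I apply Lemma~\ref{l:randomselection} to $\valsmi[1]$: conditional on $1\in\market$, $\sals$ is exactly biased sampling of $\valsmi[1]$, so $\expect{\EFO(\sals)\given 1\in\market}\geq\coin\,\EFO(\valsmi[1])$. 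This produces the leading $(1-\coin)\coin\,\EFO(\valsmi[1])$ contribution to the final bound.

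It remains to control $\expect{\sum_{i=1}^{\domind}\EFO_i(\sals)\given 1\in\market}$. The key step is to note that, by the envy-free payment identity of Lemma~\ref{l:envyfree}, each contribution $\EFO_i(\sals)$ is dominated by the top agent's envy-free payment in $\sals$, and the top agent's payment in $\sals$ is in turn bounded by the corresponding envy-free quantity for a valuation profile whose coordinates are all capped at $\vali[2]$ (this cap is valid precisely because $1\in\market$). Combining these per-agent bounds yields an estimate of the form $\sum_{i=1}^{\domind}\EFO_i(\sals)\leq\domind\cdot\EFO(\vali[2])$. Taking expectations and applying Lemma~\ref{l:dominanceprob}(c), which gives $\expect{\domind\given 1\in\market}=\sum_i i\,\prob{\domi[i]\given 1\in\market}\leq\coin/(1-2\coin)^2$, together with the outer $(1-\coin)$ prefactor, the expected loss is at most $\tfrac{\coin(1-\coin)}{(1-2\coin)^2}\EFO(\vali[2])$, completing the proof.

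The principal technical obstacle is the quantitative bound $\sum_{i=1}^{\domind}\EFO_i(\sals)\leq\domind\cdot\EFO(\vali[2])$: in a symmetric cardinal environment the envy-free revenue contributions do not decompose as cleanly as in the digital-goods case, and the argument must combine the telescoping form of the envy-free payment identity with a monotonicity statement for $\EFO(\cdot)$ under value-capping. Once that estimate is in hand the rest is a direct application of Lemmas~\ref{l:PEC}, \ref{l:randomselection}, and~\ref{l:dominanceprob}(c).
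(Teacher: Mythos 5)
Your proposal is correct and follows essentially the same route as the paper: condition on agent $1\in\market$, decompose $\PE{\sals}(\mals)\geq\EFO(\sals)-\sum_{i\le\domind}\EFO_i(\sals)$ via Lemma~\ref{l:PEC}, bound each $\EFO_i(\sals)\le\EFO(\vali[2])$, then invoke Lemmas~\ref{l:randomselection} and~\ref{l:dominanceprob}(c) and multiply by $(1-\coin)$ to remove the conditioning. The only difference is that you treat the per-agent bound $\EFO_i(\sals)\le\EFO(\vali[2])$ as a technical obstacle requiring a telescoping/value-capping argument, whereas the paper just states it as an observation; your worry is not really warranted, since once $1\in\market$ every coordinate of $\sals$ is at most $\vali[2]$, so by the maximum envy-free payment formula any agent's payment is at most $\vali[2]$ times her service probability (and at most the budget), which is exactly $\EFO(\vali[2])$.
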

\begin{proof}
Condition on the case that the highest-valued agent, i.e., $1$, is in
$\market$ and let $\domind$ is the index after which $\mals$ is
one-ahead for $\sals$.  From Lemma~\ref{l:PEC}, the profit extractor's
revenue conditioned on $\domi$ is
$\expect{\PE{\sals}(\mals)\given\domi}
\geq\expect{\EFO(\sals)\given\domi}-\sum_{j=1}^i\expect{\EFO_j(\sals)
  \given\domi}$.  This inequality holds since we would extract the
payment from all agents that are lower than $i$; or equivalently, we
would extract the full payment (the first term of the right-hand
side) minus those from the $i$ highest agents (the second term).
Using the observation that $\EFO_j(\sals)\leq \EFO(\vali[2])$ for all
$j$ (because agent $1$ is in $\market$), we have:
\begin{align*}
\expect{\PE{\sals}(\mals)\given\domi}
\geq\expect{\EFO(\sals)\given\domi}-i\EFO(\vali[2]).
\end{align*}
Summing these revenue guarantees over all $\domind$, we have:
\begin{align*}
\BSPE{\coin}(\vals)
&=\sum\nolimits_{i=1}^\infty\expect{\PE{\sals}(\mals)\given\domi}\prob{\domi}\\
&\geq\sum\nolimits_{i=1}^\infty\expect{\EFO(\sals)\given\domi}\prob{\domi}\\
&\qquad-\EFO(\vali[2])\sum\nolimits_{i=1}^\infty i\prob{\domi}\\
&=\expect{\EFO(\sals)}-\EFO(\vali[2])\tfrac{\coin}{(1-2\coin)^2}\\
&\geq\coin\EFO(\valsmi[1])-\EFO(\vali[2])\tfrac{\coin}{(1-2\coin)^2}.
\end{align*}
The last inequality comes from Lemma~\ref{l:randomselection} on
$\valsmi[1]$. Finally, we remove the conditioning on $1\in \market$ by
multiplying the above quantity by the probability $1-\coin$.  
\end{proof}

\begin{definition}[Pseudo-Vickery]
The {\em pseudo-Vickrey} auction finds the feasible outcome $\allocs$
that optimizes $\alloci[1]$ with $\alloci[j] = 0$ for $j\neq 1$ and
runs the clinching auction with position weights $\allocs$.
\end{definition}

\cite{HY-11} observe that $\EFO$ is subadditive (without budgets); it
continues to be subadditive with budgets (see
Lemma~\ref{l:subadditivity}).  Thus,
$\EFO(\valsmi[1])+\EFO(\vali[2])\geq\EFO\super{2}(\vals)$.\footnote{$\EFO\super{2} (\vals) = \EFO(\vals \super 2)$ where $\vals \super 2 = (\vali[2],\vali[2],\vali[3],\ldots,\vali[n])$.}
Furthermore, since pseudo-Vickery obtains at least
$\EFO(\vali[2])$, we have the following result.

\newcommand{\mcoin}{q}
\begin{theorem}
The convex combination of the pseudo-Vickery auction (with
probability $\tfrac{\mcoin}{1+\mcoin}$) and $\BSPE{\coin}$ (with probability
$\tfrac{1}{1+\mcoin}$), where $\mcoin=(1-\coin)\coin+\tfrac{\coin(1-\coin)}
{(1-2\coin)^2}$, approximates $\EFO\super{2}(\vals)$ within a factor
of $1+\tfrac{1}{(1-\coin)\coin}+\tfrac{1}{(1-2\coin)^2}$. This ratio
is minimized at $\budgetapprox$ when $\coin=0.211$.
\end{theorem}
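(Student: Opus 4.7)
The plan is to combine the two revenue lower bounds already in hand and then perform a one-variable optimization in $\coin$. The pseudo-Vickrey auction earns at least $\EFO(\vali[2])$, and Lemma~\ref{l:approx:BSPE} gives
\[
\BSPE{\coin}(\vals) \;\geq\; (1-\coin)\coin\,\EFO(\valsmi[1]) \;-\; \tfrac{\coin(1-\coin)}{(1-2\coin)^2}\,\EFO(\vali[2]).
\]
Writing $a = (1-\coin)\coin$ and $b = \tfrac{\coin(1-\coin)}{(1-2\coin)^2}$ so that $\mcoin = a + b$, the expected revenue of the randomized mixture is
\[
\tfrac{\mcoin}{1+\mcoin}\,\EFO(\vali[2]) \;+\; \tfrac{1}{1+\mcoin}\bigl(a\,\EFO(\valsmi[1]) - b\,\EFO(\vali[2])\bigr).
\]

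The key observation is that the mixing weights were engineered exactly so that the $\EFO(\vali[2])$-coefficient collapses to $(\mcoin - b)/(1+\mcoin) = a/(1+\mcoin)$, matching the $\EFO(\valsmi[1])$-coefficient. The displayed expression therefore equals $\tfrac{a}{1+\mcoin}\bigl(\EFO(\vali[2]) + \EFO(\valsmi[1])\bigr)$. I would then invoke subadditivity of $\EFO$ (Lemma~\ref{l:subadditivity}), which gives $\EFO(\vali[2]) + \EFO(\valsmi[1]) \geq \EFO\super{2}(\vals)$, so the mixture achieves at least $\tfrac{a}{1+\mcoin}\EFO\super{2}(\vals)$. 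The approximation factor is thus
\[
\tfrac{1+\mcoin}{a} \;=\; 1 + \tfrac{1}{a} + \tfrac{b}{a} \;=\; 1 + \tfrac{1}{(1-\coin)\coin} + \tfrac{1}{(1-2\coin)^2},
\]
using $b/a = 1/(1-2\coin)^2$, which matches the stated bound.

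For the numerical claim, I would minimize this univariate function over $\coin \in (0,1/2)$. Differentiating yields a transcendental equation whose unique interior root can be located numerically at $\coin \approx 0.211$, at which point the ratio evaluates to just under $\budgetapprox$. The main obstacle is really only bookkeeping: making the cancellation in the $\EFO(\vali[2])$ coefficient transparent, and verifying that the optimal $\coin$ lies safely inside $(0,1/2)$ so that the hypotheses of Lemma~\ref{l:approx:BSPE} (via Lemma~\ref{l:dominanceprob}) hold.
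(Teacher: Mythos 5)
Your proof is correct and matches the paper's (unwritten) argument: mix pseudo-Vickrey, which yields $\EFO(\vali[2])$, with $\BSPE{\coin}$ via Lemma~\ref{l:approx:BSPE}, choose weights so the $\EFO(\vali[2])$ coefficient cancels to match the $\EFO(\valsmi[1])$ coefficient, and apply subadditivity. One small quibble: minimizing $1+\tfrac{1}{\coin(1-\coin)}+\tfrac{1}{(1-2\coin)^2}$ does not lead to a transcendental equation -- the first-order condition simplifies to $(1-2\coin)^2=2\coin(1-\coin)$, i.e.\ $6\coin^2-6\coin+1=0$, giving $\coin=\tfrac{3-\sqrt{3}}{6}\approx 0.211$ in closed form.
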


\begin{proof}[\NOTSTOC{Proof }of Lemma~\ref{l:dominanceprob}]
Consider the following infinite random walk on a straight line:
starting from position $0$, with probability $\coin$, move backward
one step; otherwise, move forward one step.  The position of this
random walk describes precisely the difference between the number of
agents in $\market$ and $\sample$, where positive value means
$\market$ has more agents than $\sample$. We will show the results as
equalities by a ``probability of ruin'' analysis of an infinite random
walk; inequalities follow for random walks that terminate after a
finite number $n$ of steps.
\begin{enumerate}
\item The event $\mals\not\geq\sals$ happens when there exists a time
that $\market$ has fewer agents than $\sample$.  Let $\ruin$ be the
probability of ruin, i.e., the random walk eventually takes one step
backward from the initial position, we have $\ruin = \coin + (1 -
  \coin) \ruin ^ 2$. The first component is the probability of taking
one step backward in the first step, and the second component is the
probability of the first step being a forward step, then eventually
take two steps backward.  Solving this equation for $\ruin \in (0,1)$
gives $\ruin = \coin / (1 - \coin)$.

\item When we condition on $1 \in \market$, our initial position is
$1$, not $0$, and the probability of ruin is $\ruin^2$.

\item We will first derive $\prob{\domi \given 1 \in \market}$ for
$i \geq 1$.  Since $i$ is the lowest index after which $\mals$ is
one-ahead for $\sals$, we must have (a) an equal partition amongst the
top $2i$ agents, (b) $\vali[2i+1]$ is in $\market$, and (c) from this
point on, the number of agents assigned to $\market$ is never fewer
than that from $\sample$.  Thus, by conditioned on the highest value
agent already in $\market$, we
have:
\begin{align}
\notag
&\prob{\domi[i]\given1\in\market}\\
\notag
&\qquad= \tbinom{2i-1}{i}\coin^{i}(1-\coin)^{i-1} (1-\coin)
  \big(1-\tfrac{\coin}{1-\coin}\big)\\
\label{eq:iPr}
&\qquad= \tbinom{2i}{i}\big[\coin(1-\coin)\big]^{i}
   \tfrac{1-2\coin}{2(1-\coin)}.
\end{align}
The Taylor's series expansion of $\tfrac{1}{\sqrt{1-4z}}$ for any
$0<z<1/4$ gives us
\begin{align}
\notag
\sum_{i=0}^\infty\tbinom{2i}{i}z^i & = \tfrac{1}{\sqrt{1-4z}}.\\
\intertext{By differentiating both sides with respect to $z$,
then multiplying them with $z$, we have}
\notag
\sum_{i=1}^\infty i\tbinom{2i}{i}z^i & = \tfrac{2z}{(1-4z)\sqrt{1-4z}}.\\
\intertext{For $z=\coin(1-\coin)$, we have $\sqrt{1-4z}=1-2\coin$.
Hence, this equality translates to}
\label{eq:ibinom}
\sum_{i=1}^\infty i\tbinom{2i}{i}\big[\coin(1-\coin)\big]^i
   & =\tfrac{2\coin(1-\coin)}{(1-2\coin)^3}.
\end{align}
Putting these all together,
\begin{align*}
&\sum_{i=1}^\infty i\prob{\domi[i]\given1\in\market}\\
&\qquad=\sum_{i=1}^\infty i\tbinom{2i}{i}\big[\coin(1-\coin)\big]^{i}
   \tfrac{1-2\coin}{2(1-\coin)}&\mbox{from~\eqref{eq:iPr}}\\
&\qquad=\tfrac{2\coin(1-\coin)}{(1-2\coin)^3} \tfrac{1-2\coin}{2(1-\coin)}&\mbox{from~\eqref{eq:ibinom}}\\
&\qquad=\tfrac{\coin}{(1-2\coin)^2}.
\end{align*}
\end{enumerate}
\end{proof}

\bibliographystyle{apalike}
\bibliography{dhh13}

\appendix

\section{Closed-form characterization of the clinching auction.}
\label{s:closedform}

Here we give a detailed analysis of the clinching auction
(Definition~\ref{d:clinching}) in symmetric ordinal (a.k.a.~position)
environments.  Since position environments are symmetric and the
budgets are equal, agents who have not retired would clinch the same
amount and have the same budget. The feasibility constraints imposed
by the position environment dictate that $\cumalloci = \sum_{j=1}^i
\alloci[j] \leq \sum_{j=1}^i\posi[j] = \supplyi$, i.e., the cumulative
allocation is at most the cumulative supply.  Furthermore, at any
price-clock $\clock$ and budget $\budget$, each agent that can afford
$\clock$ would desire at most $\tfrac{\budget}{\clock}$ units. Thus,
the budget constraint means that $\cumalloci\leq
i\tfrac{\budget}{\clock}$, i.e., the cumulative allocation is at most
the cumulative demand.  These two constraints combine to restrict the
cumulative allocation by the minimum of the cumulative demand and
supply, i.e., $\cumalloci\leq\cumfeasi$ where
$\cumfeasi=\min\big(\supplyi,i\tfrac{\budget}{\clock}\big)$; we refer
to this as the {\em cumulative allocation constraint}.  The demand,
supply, and cumulative allocation constraints are all concave.

As agents simultaneously clinch some amount $\clinchamt$ at
price-clock $\clock$, the feasibility of the remaining allocation and
the budget would also change.  More specifically, $\supplyi$ would
become $\rsupplyi=\supplyi-i\clinchamt$, $\budget$ would become
$\rbudget=\budget-\clock\clinchamt$, and $\cumfeasi$ would become
$\rcumfeasi = \min\big(\rsupplyi,i\tfrac{\rbudget}{\clock}\big)$.  The
following lemma describes the clinching behavior of the agents as the
price-clock increases. Importantly, it shows that (a) when $i$ agents
remain, only the cumulative supply, demand, and allocation constraints
on $i-1$ and $i$ are relevant; and (b) after clinching the allocation
constraints on $i-1$ and $i$ are equal.  See Figure~\ref{f:clinchamt}
for a depiction of the relevant constraints in the clinching process.

\begin{lemma}
\label{l:clinchamt}
For a price clock $\clock\in(\vali[i+1],\vali]$ and cumulative
  allocation constraints $\cumfeasi[i-1]$ and $\cumfeasi$, the amount
  that each agent can clinch is precisely $\clinchamt =
  \cumfeasi-\cumfeasi[i-1]$, after which the newly induced cumulative
  allocation constraints are equal, i.e., $\rcumfeasi[i-1] =
  \rcumfeasi$, and the same (supply or demand) constraint is binding
  at $i-1$.
\end{lemma}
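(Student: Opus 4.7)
My plan is to reduce the lemma to a case analysis on which of the two defining constraints (supply $\supplyi[j]$ or demand $j\budget/\clock$) is binding at indices $j=i-1$ and $j=i$. At clock $\clock$ with $i$ remaining agents and common remaining budget $\budget$, symmetry forces every remaining agent to clinch the same amount $\clinchamt$. Applying the clinching definition (Definition~\ref{d:clinching}) to this symmetric setting, $\clinchamt$ equals the agent's demand $\budget/\clock$ capped by the residual supply left over if the other $i-1$ agents are first given their maximum collective allocation, namely $\clinchamt=\min\bigl(\budget/\clock,\ \supplyi-(i-1)y^{\ast}\bigr)$, where $y^{\ast}=\min\bigl(\budget/\clock,\ \supplyi[i-1]/(i-1)\bigr)$ is the largest symmetric per-agent amount the other $i-1$ agents can simultaneously receive under the position-environment feasibility constraints.

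Because $\posi[1]\geq\cdots\geq\posi[n]$, the averages $\supplyi[j]/j$ are non-increasing in $j$, so $\cumfeasi[j]/j=\min(\supplyi[j]/j,\budget/\clock)$ is non-increasing as well. Consequently, demand binding at index $i$ forces demand binding at index $i-1$, ruling out the ``supply at $i-1$, demand at $i$'' combination and leaving three cases: (A) supply binds at both, (B) demand binds at $i-1$ while supply binds at $i$, and (C) demand binds at both. Plugging the appropriate value of $y^{\ast}$ into the expression for $\clinchamt$ and checking which term of the outer $\min$ is active confirms, in all three cases, that $\clinchamt=\cumfeasi-\cumfeasi[i-1]$.

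Next I would compute the updated quantities $\rsupplyi[j]=\supplyi[j]-j\clinchamt$ and $\rbudget=\budget-\clock\clinchamt$ in each case and verify both $\rcumfeasi[i-1]=\rcumfeasi$ and preservation of the binding-constraint identity at $i-1$. Case (A), with $\clinchamt=\posi$, immediately gives $\rsupplyi=\rsupplyi[i-1]$, and the original supply-binding inequalities yield $j\rbudget/\clock\geq\rsupplyi[j]$ for $j\in\{i-1,i\}$, so supply remains binding at both indices. Case (C), with $\clinchamt=\budget/\clock$, sends $\rbudget$ to zero so both caps collapse to zero with demand trivially binding.

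The heart of the argument---and the main obstacle---is case (B). There $\clinchamt=\supplyi-(i-1)\budget/\clock$, and a careful substitution yields $\rsupplyi=(i-1)(i\budget/\clock-\supplyi)=(i-1)\rbudget/\clock$, so the supply cap at $i$ and the demand cap at $i-1$ become equal. The defining inequalities $\supplyi[i-1]\geq(i-1)\budget/\clock$ (demand binding at $i-1$) and $\supplyi\leq i\budget/\clock$ (supply binding at $i$) then show that the ``other'' cap is at least as large at each index, so supply remains binding at $i$, demand remains binding at $i-1$, and $\rcumfeasi=\rcumfeasi[i-1]$ as required. The sign checks ($\rsupplyi\geq 0$ and $\rbudget\geq 0$) fall out of exactly the inequalities defining case (B), so the bookkeeping there is the only delicate piece of the proof.
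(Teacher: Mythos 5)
Your proof is correct, but it takes a longer route than the paper's. The paper avoids all case analysis with one algebraic observation: after each remaining agent clinches $\clinchamt$, the supply term drops to $\rsupplyi[j]=\supplyi[j]-j\clinchamt$ and the demand term drops to $j\rbudget/\clock = j(\budget-\clock\clinchamt)/\clock = j\budget/\clock - j\clinchamt$ --- that is, \emph{both} arguments of the $\min$ defining $\cumfeasi[j]$ decrease by exactly $j\clinchamt$. Hence $\rcumfeasi[j]=\cumfeasi[j]-j\clinchamt$ for $j\in\{i-1,i\}$, and subtracting, $\rcumfeasi-\rcumfeasi[i-1]=(\cumfeasi-\cumfeasi[i-1])-\clinchamt=0$. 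Because the two arguments shift by the same amount, whichever was the minimizer at $i-1$ remains the minimizer, which gives the binding-constraint preservation for free.

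Your version reproduces this conclusion but does so by enumerating the three possible binding patterns at $(i-1,i)$ and verifying the arithmetic separately in each; your case (B) computation of $\rsupplyi=(i-1)\rbudget/\clock$ is in effect rediscovering the paper's identity in the only nontrivial case. One thing your approach buys is that you actually \emph{derive} $\clinchamt=\cumfeasi-\cumfeasi[i-1]$ from the clinching rule (via $\clinchamt=\min(\budget/\clock,\ \supplyi-(i-1)y^{\ast})$ and the observation that $\cumfeasi-\cumfeasi[i-1]\le\budget/\clock$ in all cases), whereas the paper asserts this equality directly from the definition without checking that the agent's own demand cap never binds more tightly. So your argument is less elegant but slightly more complete on that first step; on the "after clinching" step the paper's uniform-shift observation is the cleaner tool and you should know it.
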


\begin{proof}
By definition of the clinching process, the amount that each agent can
clinch is that remaining when all other agents have first taken as
much as possible, this is exactly
$\clinchamt=\cumfeasi-\cumfeasi[i-1]$ when there are $i$ active
agents.  With an initial budget constraint of $\budget$ and price
clock of $\clock$, the budget and cumulative supply after clinching
are $\rbudget = \budget - \clock \clinchamt$ and $\rsupplyi = \supplyi
- i \clinchamt$.  Therefore the cumulative allocation constraint is
$\rcumfeasi=\min\big(\rsupplyi,i\tfrac{\rbudget}{\clock}\big)=\min\big(\supplyi-i\clinchamt,i\tfrac{\budget}{\clock}-i\clinchamt\big)=\cumfeasi-i\clinchamt$. Similarly,
$\rcumfeasi[i-1]=\cumfeasi[i-1]-(i-1)\clinchamt$. From the definition
of $\clinchamt$, then, the resulting cumulative allocation constraints
are equal, i.e., $\rcumfeasi[i-1] = \rcumfeasi$.
\end{proof}

\begin{figure}[ht!]
\begin{center}
\begin{pspicture}(-1,-0.5)(9,7)
\psaxes[labels=none,ticks=none]{->}(8,7)
\psline(-0.1,0.0)(0.1,0.0)\rput[r](-0.2,0.0){$0$}  
\psline(-0.1,1.4)(0.1,1.4)\rput[r](-0.2,1.4){$\cumfeasi[1]$}
\psline(-0.1,2.8)(0.1,2.8)\rput[r](-0.2,2.8){$\cumfeasi[2]$}
\psline(-0.1,5.4)(0.1,5.4)\psline[linestyle=dotted](0.1,5.4)(7.0,5.4)\rput[r](-0.2,5.5){$\cumfeasi$}
\psline(-0.1,5.1)(0.1,5.1)\psline[linestyle=dotted](0.1,5.1)(7.0,5.1)\rput[r](-0.2,5.0){$\cumfeasi[i-1]$}
\psline(1,-0.1)(1,0.1)\rput[t](1,-0.2){$1$}
\psline(2,-0.1)(2,0.1)\rput[t](2,-0.2){$2$}
\psline(6,-0.1)(6,0.1)\psline[linestyle=dotted](6,0.1)(6,5.1)\rput[t](6,-0.2){$i-1$}
\psline(7,-0.1)(7,0.1)\psline[linestyle=dotted](7,0.1)(7,5.4)\rput[t](7,-0.2){$i$}
\psline[linecolor=red,showpoints=true,linestyle=dashed](0,0)(1,2)(2,3)(3,3.8)(4,4.3)(5,4.7)(6,5.1)(7,5.4)
\psline[linecolor=blue,showpoints=true,linestyle=dashed]{->}(0,0)(1,1.4)(2,2.8)(3,4.2)(4,5.6)(5,7)
\psline[linecolor=purple,linewidth=2pt](0,0)(1,1.4)(2,2.8)(3,3.8)(4,4.3)(5,4.7)(6,5.1)(7,5.4)
\psline[linecolor=purple,showpoints=true](0,0)(1,1.1)(2,2.2)(3,2.9)(4,3.1)(5,3.2)(6,3.3)(7,3.3)
\rput[b](3.5,6){$\tfrac{\budget}{\clock}$}
\end{pspicture}
\caption{For a price-clock $\clock\in(\vali[i+1],\vali]$ the
cumulative supply constraints (red dashed line) can be represented by
a concave curve with the vertical axis representing the cumulative
allocation and the horizontal axis representing the number of agents
who can afford $\clock$.  The budget $\budget$ imposes a sumulative
demand constraint as given by a straight line with slope
$\tfrac{\budget}{\clock}$ (blue dashed line).  The cumulative
allocation constraint is the minimum of the cumulative supply and
demand constraints (thick purple line).  Any concave non-decreasing
curve that lies below the cumulative allocaton constraint is a
feasible cumulative allocation for the given supply and budget. After
the agents clinch some amount, the demand and supply constraints must
be adjusted (thin purple line) to take into account reduced budget and
remnant supply.}
\label{f:clinchamt}
\end{center}
\end{figure}
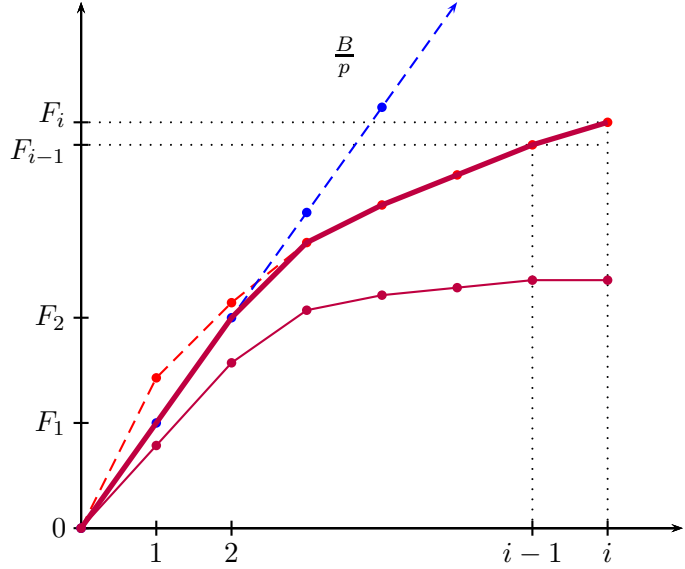

The clinching auction has two phases.  In the first phase, the supply
constraint is more restrictive than the demand constraint on $i-1$, i.e.,
$\supplyi[i-1] < (i-1) \frac{\budget}{\clock}$.  In the first phase,
nothing happens until either the price overtakes an agent's value at which point
the agent drops out and the number of active agents decreases from $i$ to
$i-1$ (and Lemma~\ref{l:clinchamt} describes how supply is clinched at
such a point) or the condition defining the first phase stops holding and
the second phase begins.  In the second phase, the two constraints are
equal, i.e., $\supplyi[i-1] = (i-1) \tfrac{\budget}{\clock}$.  Therefore,
the second phase begins with $\clock$ equal to $(i-1)
\tfrac{\budget}{\supplyi[i-1]}$ and, as the price increases, the demand
constraint binds allowing agents to gradually clinch.  The total amount
clinched in this second phase is described in the following lemma.

\begin{lemma}
\label{l:clinchprocess}
For cumulative supplies $\supplyi[i-1]$ and $\supplyi$, budget $\budget$,
and price clock $\startclock \in (\vali[i+1],\vali]$ such that
$\startclock = (i-1) \tfrac{\budget}{\supplyi[i-1]}$ (i.e., the supply and
  demand constraints are equal at $i-1$).  As the price clock increases to
$\vali$ each of the top $i$ agents clinch exactly $\tfrac{\supplyi}{i}
\big(1 - \big(\tfrac{\startclock}{\vali}\big)^{i}\big)$; the total remaining supply is $\supplyi \big(\tfrac{\startclock}{\vali}\big)^{i}$.
\end{lemma}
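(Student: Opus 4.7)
The plan is to model phase~2 as a continuous clinching process and to solve the resulting first-order separable ODE. Let $y(\clock)$ denote the per-agent amount clinched in phase~2 (so $y(\startclock)=0$), $b(\clock)$ the remaining budget (with $b(\startclock)=\budget$), and $s(\clock)=\supplyi-iy(\clock)$ the remaining cumulative supply at index $i$. By the symmetry of the top $i$ active agents, each clinches at the same rate, giving $s'(\clock)=-i\,y'(\clock)$ and $b'(\clock)=-\clock\,y'(\clock)$.

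First I would pin down the binding structure of the cumulative allocation constraints throughout phase~2. By the phase-2 hypothesis, the supply and demand coincide at index $i-1$ at $\clock=\startclock$: $(i-1)\budget/\startclock=\supplyi[i-1]$. The auction dynamics of phase~1 also ensure $\supplyi=\supplyi[i-1]$ at the start of phase~2, because the last clinching event in phase~1 equalized the cumulative allocation constraints at $i-1$ and $i$ by Lemma~\ref{l:clinchamt}. As $\clock$ rises above $\startclock$, the demand $(i-1)b(\clock)/\clock$ drops below the remaining supply at $i-1$, so demand binds at $i-1$; meanwhile $\supplyi<i\budget/\startclock=\tfrac{i}{i-1}\supplyi[i-1]$, so supply binds at $i$. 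Hence $\cumfeasi[i-1]=(i-1)b(\clock)/\clock$ and $\cumfeasi=s(\clock)$.

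Next, Lemma~\ref{l:clinchamt} says that each clinching step re-equalizes $\cumfeasi[i-1]$ and $\cumfeasi$; in the continuous limit this becomes the invariant $s(\clock)=(i-1)b(\clock)/\clock$. Differentiating with respect to $\clock$ and substituting the expressions for $s'$ and $b'$ gives $-i\,y'=-(i-1)\,y'-(i-1)b/\clock^2$, which simplifies to $y'(\clock)=s(\clock)/\clock$. Hence $s'(\clock)=-i\,s(\clock)/\clock$, and integrating this separable ODE from $\startclock$ to $\clock$ yields $s(\clock)=\supplyi\,(\startclock/\clock)^i$. Evaluating at $\clock=\vali$ gives the claimed remaining supply $\supplyi(\startclock/\vali)^i$, and therefore the per-agent phase~2 clinching $y(\vali)=(\supplyi-s(\vali))/i=\tfrac{\supplyi}{i}\bigl(1-(\startclock/\vali)^i\bigr)$.

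The main obstacle is verifying that the binding structure persists over $(\startclock,\vali]$. Once the invariant is in place it is self-consistent: $ib(\clock)/\clock=\tfrac{i}{i-1}s(\clock)>s(\clock)$ confirms that supply continues to bind at $i$, and the gap between the remaining supply at $i-1$ and $(i-1)b(\clock)/\clock$ starts at zero and has strictly positive derivative $(i-1)b(\clock)/\clock^2$ at $\startclock$, so demand binds at $i-1$ throughout. Moreover, no top-$i$ agent drops out during $(\startclock,\vali]$ since all their values are at least $\vali$, so the ODE applies on the whole interval as claimed.
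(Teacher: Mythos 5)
Your proof is correct and takes essentially the same route as the paper's: both derive the first-order ODE $\clock\,y'(\clock)=s(\clock)$ from the invariant $s(\clock)=(i-1)b(\clock)/\clock$ that Lemma~\ref{l:clinchamt} maintains, and integrate it. The paper writes the invariant as an integral equation and then differentiates, whereas you differentiate the invariant directly; these are the same computation. Your extra step verifying which constraint binds at $i-1$ versus $i$ throughout $(\startclock,\vali]$ is a worthwhile addition that the paper glosses over, though note your auxiliary fact $\supplyi=\supplyi[i-1]$ is not formally among the lemma's hypotheses — it holds in the clinching process where the lemma is applied (position $i$ has been fully consumed by the end of phase~1), and the paper's proof tacitly relies on the same fact.
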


\newcommand{\funcf}[1]{G({#1})}
\newcommand{\funcfprime}[1]{g({#1})}

\begin{proof}
Let $\funcf{\newclock}$ be the total allocation that the active $i$
agents clinch while the price clock increases from $\startclock$ to
$\newclock$.  We will see that $\funcf{\cdot}$ is continuous and
differentiable; therefore, the total payment for this clinched amount is
given by integrating the price times amount clinched at each price $z
\in [\startclock,\newclock]$, i.e., $\int_{\startclock}^{\newclock} z
\funcfprime{z}\, dz$ (where $\funcfprime{z}$ denotes the derivative of
$\funcf{z}$ with respect to $z$).  The new cumulative
supply and budget are:
\begin{align*}
\rsupplyi &= \supplyi - i \funcf{\newclock}\\
\rbudget &= \budget - \int_{\startclock}^{\newclock} z
\funcfprime{z}\, dz.
\end{align*}
By Lemma~\ref{l:clinchamt}, $\rsupplyi = \rsupplyi[i-1] = (i-1)
\tfrac{\rbudget}{\newclock}$.  Therefore,
\begin{align*}
\supplyi - i \funcf{\newclock} &= \frac{(i-1)}{\newclock} \left(\budget - \int_{\startclock}^{\newclock} z \funcfprime{z}\, dz\right).\\
\intertext{Multiplying both sides by $\newclock$, differentiating with respect to $\newclock$, and simplifying; we obtain the differential equation,}
\supplyi - i \funcf{\newclock} &= \newclock \funcfprime{\newclock}.\\
\intertext{It is easy to check that the following is the solution to this differential equation that satisfies the initial condition $\funcf{\startclock} = 0$:}
\funcf{\newclock}&=\tfrac{\supplyi}{i}
\big(1-\big(\tfrac{\startclock}{\newclock}\big)^{i}\big).
\end{align*}
Plugging in $\newclock = \vali$ gives the desired result.
\end{proof}


We are now ready to derive the closed-form characterization of the
clinching auction in position environments. Recall from
Section~\ref{s:welfare} the definition of
$\budgeti = \vali[i+1](\iposi - \posi) +
\sum_{j=i+1}^n\vali[j](\posi[j-1] - \posi[j])$, the minimum
envy-free payment (ignoring the budget) when the top $i$ agents are
ironed (Figure~\ref{f:clinchingcurve}), and $\iposi = \supplyi / i$, the average weight of the top $i$ positions.

\begin{theoremchar}
\charrefined 
\end{theoremchar}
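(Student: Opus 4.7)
The plan is to combine the qualitative structure from Theorem~\ref{t:char} with a direct two-phase trace of the clinching dynamics via Lemmas~\ref{l:clinchamt} and \ref{l:clinchprocess}.

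First, I would observe that $\budgeti$ (the minimum envy-free payment of the top agent in the ``iron the top $i$ agents'' outcome) is strictly decreasing in $i$ with $\budgeti[n] = 0$, so the hypothesis $\budgeti[\magent] < \budget \leq \budgeti[\magent-1]$ determines $\magent$ uniquely; by Theorem~\ref{t:char} this $\magent$ is exactly the highest-valued agent who strictly under-pays her budget. Claim (a)---every $i \geq \magent + 1$ receives exactly $\posi$---is then Theorem~\ref{t:char}(d) verbatim. Pareto optimality (Proposition~\ref{prop:clinching-pareto-optimal}) forbids wasted supply, so the top $\magent$ agents jointly receive $\magent\iposi[\magent]$; together with Theorem~\ref{t:char}(b) (equal service probability for agents $1,\ldots,\magent-1$) this yields the structural part of claim (b) with $\delta$ defined as the allocation gap between any top agent and agent $\magent$.

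Second, I would trace Phase 1 (while the supply constraint is binding), in which agents $n, n-1, \ldots, \magent + 1$ drop out in order. Inducting on the number of drops and applying Lemma~\ref{l:clinchamt} at each drop, I would show that (i) the allocation frozen to each dropping agent $j > \magent$ is exactly $\posi[j]$, and (ii) at the instant agent $\magent + 1$ drops, the $\magent$ survivors are in a common state with residual budget $\rbudget$ and residual cumulative supply $\rsupplyi[\magent]$ at the top $\magent$ positions. This inductive argument, which carefully tracks how the cumulative allocation constraint updates after each symmetric clinch, is the most technical step.

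Third, I would analyze Phase 2 (where the budget binds) to compute $\delta$. With $\magent$ survivors at price $\vali[\magent+1]$, the continuous clinching of Lemma~\ref{l:clinchprocess} runs from the threshold $\startclock$ at which the supply and budget constraints first meet with $\magent$ active agents (explicitly computable from $\rbudget$, $\rsupplyi[\magent]$, and $\vali[\magent+1]$) up to $\vali[\magent]$. The instant the clock exceeds $\vali[\magent]$, agent $\magent$ drops out and the top $\magent - 1$ agents evenly split the residual unclinched supply, which by Lemma~\ref{l:clinchprocess} has the explicit form $\rsupplyi[\magent](\startclock/\vali[\magent])^{\magent}$ up to a minor correction at position $\magent$ inherited from Phase~1. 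The resulting per-agent split is precisely $\delta$ as a simple function of $\vali[\magent+1]$, $\vali[\magent]$, $\rbudget$, and $\rsupplyi[\magent]$. The main obstacle throughout is the Phase~1 bookkeeping; Phase~2 is essentially a plug-in of Lemma~\ref{l:clinchprocess}.
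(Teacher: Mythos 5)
Your approach matches the paper's: trace the clinching dynamics in two phases using Lemmas~\ref{l:clinchamt} and \ref{l:clinchprocess}, freeze positions $\posi$ to agents $j > \magent$ in Phase 1, then compute the residual state $(\rbudget, \rsupplyi[\magent])$ and apply Lemma~\ref{l:clinchprocess} to derive $\delta$. You additionally lean on Theorem~\ref{t:char} to deliver the structural parts of claims (a) and (b) upfront, which the paper instead re-derives implicitly while tracking the exact clinched amounts; both are fine since Theorem~\ref{t:char} is available.

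One step you gloss over and should make explicit: at the moment agent $\magent+1$ retires and $\magent$ agents remain, the paper's proof branches on whether the demand constraint is already strictly binding at $i = \magent - 1$ at price $\vali[\magent+1]$, or not yet. In the first case a \emph{discrete} clinch occurs at $\vali[\magent+1]$ (computed via Lemma~\ref{l:clinchamt}) before Lemma~\ref{l:clinchprocess}'s preconditions hold; in the second, the price clock rises freely to the level $\rrclock = (\magent-1)\rbudget/\rsupplyi[\magent-1]$ where supply and demand first meet. Your phrase ``the threshold $\startclock$ at which the supply and budget constraints first meet'' only covers the second case; if they have already met, that threshold lies below $\vali[\magent+1]$ and cannot serve as the starting price for the continuous process. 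The ``minor correction at position $\magent$ inherited from Phase~1'' hints at awareness of this, but the argument needs the explicit case split to guarantee that the inputs handed to Lemma~\ref{l:clinchprocess} are valid. With that added, the derivation of $\delta = \rrrsupplyi[\magent-1]/(\magent-1)$ as the even split of the terminal residual supply goes through exactly as in the paper.
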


\begin{proof}
We give an alternative derivation of $\magent$ in terms of the
clinching process and then prove that this definition satisfies the
requirement of the theorem.  As described by Lemma~\ref{l:clinchamt},
when there are $i$ agents remaining, the amount clinched depends on
the cumulative allocation constraint on $i-1$ and $i$.  This
cumulative allocation constraint is the minimum of the supply and
demand constraints.  Let $\magent+1$ be the highest valued agent for
whom the demand constraint is strictly non-binding throughout the
clinching process.  The process proceeds as follows.
\begin{enumerate}
\item The price clock ascends from 0 to $\vali[\magent+1]$ and by the
  definition of $\magent$ the demand constraint (from the budget) does
  not bind during this process.  Lemma~\ref{l:clinchamt} implies that
  clinching only occurs as agents retire, i.e.,
  at prices equal to the agents' values.  At each price-clock $\vali$
  for $i\geq\magent+1$, agent $i$ drops out, while the top $i-1$
  agents would clinch $\posi[i-1] - \posi[i]$ with a payment of
  $\vali(\posi - \posi[i-1])$.

After this phase the price clock is $\rclock = \vali[\magent+1]$,
there are $\magent$ agents left, each agent $i\geq\magent$ is
allocated $\posi$, and each agent $i\leq \magent$ is allocated
$\posi[\magent]$.  The remaining budget for the top $\magent$ agents
and the cumulative supplies are:
\begin{align*}
\rbudget & = \budget - \sum_{i=\magent+1}^{n}\vali(\posi[i-1]-\posi)\\
\rsupplyi[\magent] = \rsupplyi[\magent-1] & = (\magent-1)(\iposi[\magent-1] - \posi[\magent]).
\end{align*}
\item\label{p:setup} With $\magent$ agents still active, the
  cumulative allocation constraint on $\magent-1$ agents is the
  minimum of cumulative supply $\rsupplyi[\magent-1]$ and the
  cumulative demand
  $(\magent-1)\tfrac{\rbudget}{\vali[\magent+1]}$. There are two
  cases depending on which is binding.
\begin{enumerate}
\item\label{p:nonbinding} If the demand constraint (from the budget)
  is not yet binding then the price clock will continue to ascend from
  $\vali[\magent+1]$ to $\rrclock = (\magent-1)
  \tfrac{\rbudget}{\rsupplyi[\magent-1]}$ when the demand constraint
  first (weakly) binds.  This must happen at $\rrclock \leq
  \vali[\magent]$ by the definition of $\magent$.  No additional
  supply is clinched and $\rrsupply = \rsupply$ and $\rrbudget =
  \rbudget$.  At this point the conditions of
  Lemma~\ref{l:clinchprocess} are satisfied.

\item\label{p:binding} If the demand constraint is (strictly)
  binding then, at the same price clock $\rrclock = \rclock = \vali[\magent + 1]$,
  additional supply is clinched.  Lemma~\ref{l:clinchamt} says that
  each of the top $\magent$ agents clinch an additional
  $\rsupplyi[\magent] - (\magent - 1) \tfrac{\rbudget}{\vali[\magent +
      1]}$.  The new budget and cumulative constraints are:
\begin{align*}
\rrbudget
  &= \magent \rbudget - \rsupplyi[\magent] \vali[\magent + 1],\\
\rrsupplyi[\magent] = \rrsupplyi[\magent-1]
  &= (\magent - 1)(\magent\rbudget/\vali[\magent + 1] - \rsupplyi[\magent]).
\end{align*}
At this point the conditions of
  Lemma~\ref{l:clinchprocess} are satisfied.

\end{enumerate}

\item \label{p:gradual} As the conditions of
  Lemma~\ref{l:clinchprocess} are met, the price clock now ascends from
  $\rrclock$ to $\vali[\magent]$ and an amount of
  $\tfrac{\rrsupplyi[\magent]}{\magent}
  \big(1-\big(\tfrac{\rrclock}{\vali[\magent]}\big)^{\magent}\big)$ is
  clinched by the top $\magent$ agents.  The remaining supply (by Lemma~\ref{l:clinchprocess}) is 
\begin{align*}
\rrrsupplyi[\magent] = \rrrsupplyi[\magent-1] &= \rrsupplyi[\magent]  \big(\tfrac{\rrclock}{\vali[\magent]}\big)^{\magent}.
\end{align*}



\item Agent $\magent$ retires as the price clock exceeds
  $\vali[\magent]$. As the linear demand constraint was binding at
  $\magent-1$ in Step~\ref{p:gradual}, only it remains.  Each of the
  $\magent -1$ active agents clinch an amount equal to the slope of
  this constraint and thereby equally split the entire remaining
  supply of $\rrrsupplyi[\magent-1]$.  The amount clinched is
\begin{align*}
\marginal = \tfrac{\rrrsupplyi[\magent-1]}{\magent-1} = \tfrac{\rrsupplyi[\magent]}{\magent -1}  \big(\tfrac{\rrclock}{\vali[\magent]}\big)^{\magent}
\end{align*}
where plugging in the appropriate $\rrclock$ and $\rrsupplyi[\magent]$ gives the
formula for $\marginal$ in terms of $\vali[\magent+1]$, $\vali[\magent]$, $\rbudget$,
and $\rsupplyi[\magent]$ as desired by the theorem.
\end{enumerate}

Finally, we argue that $\magent$ satisfies the definition from the
theorem statement, i.e., $\budgeti[\magent]<\budget \leq
\budgeti[\magent -1]$.  This follows easily from inspection of
Figure~\ref{f:clinchingcurve}.
\end{proof}

\section{Revenue approximation without budgets}
\label{s:nobudget}

In this section we give a variant of the biased sampling profit
extraction auction for bidders without budgets.  We are able to obtain
a better approximation ratio of $\nobudgetapprox$ (versus
$\budgetapprox$) when we use the profit extractor from \citet{HH-12}
that is incompatible with budgets. To get such an improvement, we
employed three techniques that cannot be utilized when agents are
budgeted, namely:
\begin{itemize}
 \item Always keep the highest value agent in the market.
 \item Reject (almost) everyone if there is no point-wise dominance.
 \item Always serve the highest agent, even when everyone else is rejected.
\end{itemize}

These will become clear in the definition of our mechanism. First, here
is the profit extractor that was introduced by~\citet{HH-12}.

\begin{definition}[$\PER{\evals}$]
\label{def:PER}
The \emph{profit extractor with rejection} parameterized by a
non-increasing valuation vector $\evals$ whose optimal envy-free
allocation is $\eallocs$, $\PER{\evals}$, would:
\begin{enumerate}
\item Sort the bids in a non-increasing order. If $\evali>\vali$ for
some $i$, reject everyone and charge nothing.
\item Serve agent $i$ with probability $\ealloci$ and charge the
appropriate IC payment.
\end{enumerate}
\end{definition}

\begin{lemma}[\citealp{HH-12}]
\label{l:PER}
For any $\vals\geq\evals$, the revenue of the Profit Extractor with
Rejection for $\evals$ on $\vals$ is at least the envy-free optimal
revenue for $\evals$.  Moreover, the inequality holds for each agent:
$\PER{\evals}_j(\vals)\geq \EFO_j(\evals)$.
\end{lemma}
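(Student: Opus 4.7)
The plan is to fix an agent $j$ at rank $j$ in the sorted profile $\vals$ and analyze her IC payment via Myerson's characterization (Theorem~\ref{thm.myerson}):
\[
\paymenti[j] \;=\; \vali[j]\, \ealloci[j] \;-\; \int_0^{\vali[j]} \alpha_j(z)\, dz,
\]
where $\alpha_j(z)$ denotes agent $j$'s service probability when she reports $z$ and the others report truthfully. At $z = \vali[j]$ the resorted profile equals $\vals$, which dominates $\evals$ pointwise, so no rejection fires and $\alpha_j(\vali[j]) = \ealloci[j]$. The goal is to upper bound the integral by exploiting the rejection clause, thereby producing a per-agent lower bound on $\paymenti[j]$ that matches the maximum envy-free payment of agent $j$ for allocation $\eallocs$ on profile $\evals$.

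The crux is the following pointwise bound on $\alpha_j$: for $z \in (\evali[k], \evali[k-1]]$ with $k \geq j+1$ (using the convention $\evali[j-1] := \vali[j]$), we have $\alpha_j(z) \leq \ealloci[k]$. To see this, let $k'$ denote agent $j$'s rank in the resorted profile when she bids $z$. If $k' \geq k$, monotonicity of $\eallocs$ gives $\alpha_j(z) \leq \ealloci[k'] \leq \ealloci[k]$. If instead $k' < k$, then the $k'$-th position of the resorted profile is occupied by agent $j$'s own bid $z$, and $z \leq \evali[k-1] \leq \evali[k']$ triggers rejection, so $\alpha_j(z) = 0$. An analogous argument (now considering position $n$) shows $\alpha_j(z) = 0$ for $z < \evali[n]$. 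Integrating this step-function bound yields
\[
\int_0^{\vali[j]} \alpha_j(z)\, dz \;\leq\; \ealloci[j]\,(\vali[j] - \evali[j]) + \sum_{k=j+1}^n \ealloci[k]\,(\evali[k-1] - \evali[k]).
\]

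Substituting into Myerson's formula, the $\vali[j]\ealloci[j]$ terms cancel and rearrangement (a summation by parts) gives
\[
\paymenti[j] \;\geq\; \ealloci[j]\,\evali[j] - \sum_{k=j+1}^n \ealloci[k]\,(\evali[k-1] - \evali[k]) \;=\; \sum_{k=j}^n (\ealloci[k] - \ealloci[k+1])\,\evali[k],
\]
which by Lemma~\ref{l:envyfree} is exactly the maximum envy-free payment of agent $j$ for allocation $\eallocs$ on profile $\evals$, i.e., $\EFO_j(\evals)$. Summing over $j$ delivers the global bound $\PER{\evals}(\vals) \geq \EFO(\evals)$. The main obstacle is verifying the structural bound on $\alpha_j$: one must track how agent $j$'s rank shifts as she underbids and argue that whenever her new rank $k'$ would be low enough to threaten an allocation exceeding $\ealloci[k]$, her own bid $z$ sitting at position $k'$ must fall below the threshold $\evali[k']$, thereby guaranteeing rejection.
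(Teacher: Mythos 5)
The paper does not prove Lemma~\ref{l:PER}; it imports it verbatim from \citet{HH-12}, so there is no in-paper argument to compare against. Your reconstruction is correct and is the natural proof via Myerson's payment identity. The key step — that for $z$ strictly below $\evali[k-1]$ the allocation $\alpha_j(z)$ cannot exceed $\ealloci[k]$ because any rank $k'<k$ would put $j$'s own bid $z$ at a position whose threshold $\evali[k']\geq\evali[k-1]>z$ triggers rejection — is exactly right, and the telescoping then recovers $\paymenti[j]^{\max}=\sum_{k\geq j}(\ealloci[k]-\ealloci[k+1])\evali[k]$ from Lemma~\ref{l:envyfree}, which is $\EFO_j(\evals)$ since revenue-optimal envy-free outcomes charge the maximum envy-free payments.

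Two small expository gaps worth closing. First, invoking Theorem~\ref{thm.myerson} presupposes that $\alpha_j(\cdot)$ is monotone; you should note that this holds because lowering $j$'s bid can only drop her rank (weakly lowering $\ealloci[k']$) and, once a rejection fires, it persists as she lowers further — the bid that replaces her at position $k'$ is weakly smaller than $z$ and so still below $\evali[k']$. Second, your pointwise bound $\alpha_j(z)\leq\ealloci[k]$ uses $z\leq\evali[k-1]\leq\evali[k']$ to "trigger rejection," but rejection requires \emph{strict} inequality $z<\evali[k']$; at the right endpoint $z=\evali[k-1]$ (when additionally $\evali[k']=\evali[k-1]$) the bound can fail. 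This is a measure-zero set and does not affect the integral, but it should be stated — e.g., prove the bound on the open interval $(\evali[k],\evali[k-1])$.
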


This profit extractor can be used with a variant of the biased
sampling profit extraction mechanism that guarantees, in an incentive
compatible way, that the highest valued agent is in the market.  This
conditioning improves the probability that the market and sample
satisfy the dominance property necessary for the profit extractor to
succeed.

\begin{definition}[$\BSPE{\coin}$]
\label{mech:bspe}
The \emph{biased sampling profit extraction} auction parameterized by
$\coin<0.5$ works as follow.
\begin{enumerate}
\item\label{mech:bspe-pad} Pad the bids with a decreasing list of
infinitely many number of infinitesimally small positive values, e.g., 
$\vali[i]=\epsilon^i$, for $i>n$.
\item\label{mech:bspe-bs} Randomly assign each of the agents to one
of three groups $A$, $B$, and $C$ independently with probabilities
$\coin$, $\coin$, and $1-2\coin$, respectively.
\item\label{mech:bspe-v1} Assume without loss of generality that of
the highest valued agent in $A$ has value at least that of the
highest valued agent in $B$.  Define the market $\market = A \cup C$
and sample $\sample = B$.  (If this highest valued agent in $A$ wins
  in Step~\ref{mech:bspe-pe} and the second highest valued agent in
  $A \cup B$ is in $B$, increase her payment to this second highest
  value.)
\item\label{mech:bspe-pe} Run the profit extractor with rejection for
$\sals$ on $\mals$, where $\sals$ and $\mals$ are valuation vectors
of $\sample$ and $\market$ respectively.
\item\label{mech:bspe-vcg} If all agents are rejected by the profit
extractor and it is feasible to serve agent 1 (the highest valued
agent over all), serve her and charge her $\vali[2]$.
\end{enumerate}
\end{definition}

\begin{lemma}[Incentive Compatibility]
\label{l:IC}
For all probabilities $\coin < 0.5$, $\BSPE{\coin}$ is incentive compatible.
\end{lemma}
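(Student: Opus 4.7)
The plan is to verify that no agent can profit by unilateral misreport, working conditional on the random partition of agents into groups $A$, $B$, $C$ from Step~\ref{mech:bspe-bs}. Since this partition is independent of reported values, IC in expectation follows from IC for every realization, so I fix the partition and then split agents into three classes according to how much their report can affect the mechanism.

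The first class consists of all agents in $C$, together with any agent in $A$ or $B$ who is not the maximum-valued agent within their own group. For such an agent, changing their report does not alter the maxima of $A$ or $B$, and hence does not affect the relabeling of Step~\ref{mech:bspe-v1}: the market $\market$ and sample $\sample$ are fixed functions of the other agents' reports. Conditional on this fixed partition the mechanism reduces to running $\PER{\sals}$ on $\mals$, together with the rescue of Step~\ref{mech:bspe-vcg}, which only ever serves agent~$1$. Incentive compatibility for these agents therefore follows immediately from the IC of $\PER{\cdot}$ (Lemma~\ref{l:PER} combined with Myerson's lemma applied to its monotone allocation rule).

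The second class is the maximum-valued agent in whichever of $A$ or $B$ is relabeled as the sample. By over-reporting above the current market top, this agent could flip the designation in Step~\ref{mech:bspe-v1} and move into the market, becoming the new top of $A$. I would show this flip is never strictly profitable: the parenthetical correction in Step~\ref{mech:bspe-v1} charges the new top of $A$ at least the second-highest value in $A \cup B$, and by construction that second-highest value is precisely this agent's own true value (it is exactly what made the flip available). Hence their utility from winning is at most zero, which is no better than the zero utility they obtain by staying truthfully in the sample.

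The third class is the maximum-valued agent in the side currently designated as market, i.e., agent~$1$ overall. By under-reporting this agent could likewise flip the designation, and by reporting arbitrarily could also influence the $\PER{}$ outcome. The correction in Step~\ref{mech:bspe-v1} and the Vickrey-style rescue of Step~\ref{mech:bspe-vcg} are calibrated so that agent~$1$'s effective price always reduces to the second-highest value of $A \cup B$ whenever the designation is flipped, and is at least that quantity when it is not; combined with the IC of $\PER{\cdot}$ for within-market deviations, this rules out profitable misreports. The main obstacle throughout is the coupling between reports and the market/sample designation in Step~\ref{mech:bspe-v1}; once one recognizes that the payment correction and the rescue are exactly the Vickrey prices needed to decouple designation from incentives, the remaining verification reduces to a direct case analysis over whether a deviation preserves, flips, or is absorbed by the $\PER{}$ rejection rule.
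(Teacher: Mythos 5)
Your high-level plan mirrors the paper's: fix the random partition into $A$, $B$, $C$, argue that the relabeling in Step~\ref{mech:bspe-v1} cannot be profitably manipulated, and then invoke the incentive compatibility of $\PER{\cdot}$. But the case analysis has a few real gaps. First, your Class~1 argument asserts that a non-maximal agent in $A$ or $B$ ``changing their report does not alter the maxima of $A$ or $B$''; this is false for upward deviations, since any agent in the sample group can over-report to become that group's new max and flip the designation. The paper handles this differently: it does not claim such agents cannot flip, but rather observes that \emph{any} sample agent who flips becomes the new market top and is hit by the Step~\ref{mech:bspe-v1} parenthetical charge of the second-highest value in $A \cup B$, which is at least her true value, so the flip is weakly unprofitable for all sample agents, not just the group max. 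Second, in your Class~2 you say the second-highest value in $A \cup B$ ``is precisely this agent's own true value''; it is actually the pre-flip market top's value, which is only known to be \emph{at least} the deviator's true value (the conclusion of nonpositive utility still follows, but the claim as written is wrong).

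Third, Class~3 conflates ``the maximum-valued agent in the market side'' with ``agent~$1$ overall,'' which fail to coincide exactly when agent~$1$ lands in $C$; and the claim that the rescue price ``reduces to the second-highest value of $A\cup B$'' is not correct (Step~\ref{mech:bspe-vcg} charges $\vali[2]$, the overall second-highest). The paper's argument here is both simpler and tighter: the market-group max can flip only by lowering her bid, which just places her in the sample where she wins nothing from $\PER{\cdot}$; and for the Step~\ref{mech:bspe-vcg} rescue, only the overall highest-valued agent benefits, but she cannot force rejection without lowering her bid below the sample top, at which point she forfeits the rescue. Your ``calibrated Vickrey prices decouple designation from incentives'' intuition is pointing in the right direction but is not a substitute for this accounting, and without it the Step~\ref{mech:bspe-vcg} incentive constraint is left unverified.
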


\begin{proof}
Fix the partitioning of $A$, $B$, and $C$.  No agent in $\market$ can
change the definition of sets $\market$ and $\sample$ without losing
(thus obtaining zero utility).  No agent in $\sample$ can change the
definition of sets $\market$ and $\sample$ without obtaining a
payment of at least her value (from the parenthetical in
  Step~\ref{mech:bspe-v1}, thus obtaining non-positive utility).
Therefore no agent wants to manipulate the definition of $\market$ and
$\sample$.  For given $\market$ and $\sample$, this mechanism is the
profit extraction mechanism which is incentive compatible for fixed
$\market$ and $\sample$.  Only the highest valued agent would want to
win in Step~\ref{mech:bspe-vcg}; furthermore, she cannot cause
dominance to fail without lowering her bid (and forfeiting her status
as the highest bidder).
\end{proof}

\begin{theorem}
\label{t:nobudget}
For any downward-closed permutation environment and any probability
$\coin<0.5$,\footnote{The first part of this lemma is non-trivial
  only for $\coin < 0.38$.} $\BSPE{\coin}$ approximates
$\EFO\super{2}(\vals)$ within factor of $\min\big\{\coin-\big(
  \tfrac{\coin}{1-\coin}\big)^2,\big(\tfrac{\coin}{1-\coin}\big)^2
  \big\}$.\footnote{$\EFO\super{2} (\vals) = \EFO(\vals \super 2)$ where $\vals \super 2 = (\vali[2],\vali[2],\vali[3],\ldots,\vali[n])$.} This ratio is optimized at $\coin=0.268$ which gives a
$\nobudgetapprox$ approximation.
\end{theorem}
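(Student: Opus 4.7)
I plan to analyze the expected revenue of $\BSPE{\coin}$ by decomposing it according to whether pointwise dominance $\mals\geq\sals$ holds, using the profit extractor with rejection on the good event and the pseudo-Vickrey fallback (Step~5 of the mechanism) on the bad event, and then comparing to $\EFO\super{2}(\vals)$ by subadditivity.

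First I will record two structural observations. (i) Agent $1$ always ends up in the market: if $1\in A\cup B$ the WLOG relabeling in Step~3 puts her in $A$ (since her group realizes the maximum of $A\cup B$), while if $1\in C$ she is in the market by definition. (ii) Consequently, whenever dominance fails PER charges nothing (Lemma~\ref{l:PER}), so Step~5 triggers and yields revenue $\vali[2]=\EFO(\vali[2])$; whenever dominance holds, Lemma~\ref{l:PER} guarantees PER revenue at least $\EFO(\sals)$.

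Next, let $D=\{\mals\geq\sals\}$ and $q=\prob{\neg D}$. The tripartition is arranged so that agent $1$ is excluded from the sample and each other agent is biased at most as strongly into the sample as under plain $\coin$-biased sampling, so Lemma~\ref{l:dominanceprob}(b) yields $q\leq(\coin/(1-\coin))^2$, and Lemma~\ref{l:randomselection} applied to $\valsmi[1]$ gives $\expect{\EFO(\sals)}\geq \coin\,\EFO(\valsmi[1])$. Together with the monotonicity bound $\EFO(\sals)\leq\EFO(\valsmi[1])$, I obtain
\begin{align*}
\expect{\BSPE{\coin}(\vals)}
  &\geq \expect{\EFO(\sals)\,\mathbb{1}[D]} + q\,\EFO(\vali[2])\\
  &\geq \expect{\EFO(\sals)} - q\,\EFO(\valsmi[1]) + q\,\EFO(\vali[2])\\
  &\geq (\coin-q)\,\EFO(\valsmi[1]) + q\,\EFO(\vali[2]).
\end{align*}
Since $\EFO$ is subadditive (as used in Section~\ref{s:revenue}), $\EFO(\valsmi[1])+\EFO(\vali[2])\geq\EFO\super{2}(\vals)$, so the right-hand side is at least $\min(\coin-q,\,q)\cdot\EFO\super{2}(\vals)$, which is the claimed approximation factor.

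To extract the $\nobudgetapprox$-approximation I will equate the two terms: setting $\coin-q=q$ gives $(1-\coin)^2=2\coin$, whose relevant root is $\coin=2-\sqrt{3}\approx 0.268$, at which both terms equal about $0.134$. The main obstacle I anticipate is rigorously verifying that Lemma~\ref{l:dominanceprob}(b) and Lemma~\ref{l:randomselection} transfer from the plain biased-sampling model of Section~\ref{s:revenue} to the tripartition-with-relabeling used here; the padding in Step~1 of the mechanism and the symmetry of the WLOG relabeling are precisely what is needed so that these transfers go through without worsening the sample's dominance-failure probability or its expected $\EFO$.
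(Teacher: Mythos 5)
Your decomposition by the dominance event, the use of Lemma~\ref{l:PER} on the good event, the pseudo-Vickrey fallback on the bad event, and the closing subadditivity step all mirror the paper's own argument faithfully, so the overall route is correct.

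There is, however, a real gap at the final step. You establish only the inequality $q=\prob{\mals\not\geq\sals}\leq\big(\tfrac{\coin}{1-\coin}\big)^2$, and then conclude ``the right-hand side is at least $\min(\coin-q,\,q)\cdot\EFO\super{2}(\vals)$, which is the claimed approximation factor.'' But $\min(\coin-q,\,q)$ is \emph{not} the claimed factor unless $q$ equals $\big(\tfrac{\coin}{1-\coin}\big)^2$ exactly; the map $q\mapsto\min(\coin-q,\,q)$ is not monotone, and if $q$ happens to be small (say $q\to 0$) your bound degenerates to $0$, which is vacuous. The paper closes this by invoking the padding in Step~\ref{mech:bspe-pad} precisely to upgrade part~(\ref{l:nondominanceprob}) of Lemma~\ref{l:dominanceprob} to an \emph{equality}, $q=\big(\tfrac{\coin}{1-\coin}\big)^2$, after which $\min(\coin-q,\,q)$ really is the stated constant. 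Alternatively, you could retain the one-sided bound on $q$ but insert the missing step: since $\EFO(\valsmi[1])\geq\EFO(\vali[2])$, the expression $(\coin-q)\EFO(\valsmi[1])+q\,\EFO(\vali[2])$ is non-increasing in $q$, so you may first replace $q$ by its upper bound $\big(\tfrac{\coin}{1-\coin}\big)^2$ and \emph{then} apply subadditivity, which yields the fixed constant $\min\bigl\{\coin-\big(\tfrac{\coin}{1-\coin}\big)^2,\big(\tfrac{\coin}{1-\coin}\big)^2\bigr\}$. Either fix is short, but as written your chain does not reach the theorem's constant.

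One further cautionary note on the ``main obstacle'' you flag: when agent $1\in C$, the relabeling in Step~\ref{mech:bspe-v1} forces the highest-valued non-$C$ agent among $\{2,\ldots,n\}$ into the market, so the sample is \emph{not} an i.i.d.\ $\coin$-subsample of $\valsmi[1]$ (agent $2$'s marginal probability of landing in $\sample$ drops to $2\coin^2$). This is exactly the kind of distortion that must be ruled out (or shown to only help) before Lemma~\ref{l:randomselection} and Lemma~\ref{l:dominanceprob} can be invoked verbatim; the padding and the symmetry of the relabeling are the right ingredients, but the verification is non-trivial and deserves more than the one sentence you allot it.
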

\begin{proof}
Lemma~\ref{l:PER} says that we would obtain at least $\EFO(\sals)$
when $\mals\geq\sals$, while Step~\ref{mech:bspe-vcg} of the mechanism
guarantees $\EFO(\vali[2])$ otherwise. Furthermore,
Step~\ref{mech:bspe-pad} ensures the equality in part
(\ref{l:nondominanceprob}) of Lemma~\ref{l:dominanceprob}. Thus, the
expected revenue is at least:
\begin{align*}
\expect{\BSPE{\coin}(\vals)}\STOC{\\}
&\geq\expect{\EFO(\sals)\given\mals\geq\sals}\cdot\prob{\mals\geq\sals}\\
&\qquad + \EFO(\vali[2])\cdot\prob{\mals\not\geq\sals}\\
&=\expect{\EFO(\sals)} - \expect{\EFO(\sals)\given\mals\not\geq\sals}\cdot\prob{\mals\not\geq\sals}\\
&\qquad + \EFO(\vali[2])\cdot\prob{\mals\not\geq\sals}\\
&\geq\coin\EFO(\valsmi[1])-\EFO(\valsmi[1])\cdot\prob{\mals\not\geq\sals}\\
&\qquad + \EFO(\vali[2])\cdot\prob{\mals\not\geq\sals}\\
&=\big[\coin-\big(\tfrac{\coin}{1-\coin}\big)^2\big]\cdot\EFO(\valsmi[1])
     + \big(\tfrac{\coin}{1-\coin}\big)^2\cdot\EFO(\vali[2])\\
&\geq\min\big\{\coin-
  \big(\tfrac{\coin}{1-\coin}\big)^2,\big(\tfrac{\coin}{1-\coin}
  \big)^2\big\}\EFO\super{2}(\vals).
\end{align*}
The second inequality warrants some explanation: the first term
follows from applying Lemma~\ref{l:randomselection} to $\valsmi[1]=(\vali[2],\vali[3],\ldots)$,
the second term follows from monotonicity of $\EFO$. The last
inequality follows from the subadditivity of $\EFO$ (see
Lemma~\ref{l:subadditivity}, below) which implies that
$\EFO(\valsmi[1])+\EFO(\vali[2])\geq \EFO\super{2}(\vals)$.
\end{proof}

\section{Missing Proofs}
\label{app:proofs}

\citet{HY-11} gave the following proof that envy-free revenue is
subaditive; the proof is unaffected by the agents' budget.

\begin{lemma}
\label{l:subadditivity}
The envy-free optimal revenue for agents with a common budget is a
subadditive function.
\end{lemma}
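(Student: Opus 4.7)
The plan is to mimic the standard subadditivity argument for envy-free revenue from \citet{HY-11} and check that each ingredient continues to hold when the agents share a common budget $\budget$. Concretely, I would interpret subadditivity as follows: for any partition of the agent set into two groups $A$ and $B$, with $\vals^A$ and $\vals^B$ denoting the corresponding restricted valuation profiles, the claim is that
\begin{align*}
\EFO(\vals,\budget) \leq \EFO(\vals^A,\budget) + \EFO(\vals^B,\budget).
\end{align*}

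The first step is to take an envy-free, individually rational, budget-respecting outcome $(\allocs,\payments)$ that is optimal for $\vals$ at budget $\budget$, and consider its restrictions $(\allocs^A,\payments^A)$ and $(\allocs^B,\payments^B)$ to the two groups (simply zero out allocations and payments for agents not in the group in question). Then I would verify that each restricted outcome is itself EF, IR, and BR for the corresponding restricted instance. Envy-freeness and individual rationality restrict without change, because if no agent envied any other in the combined outcome, then in particular no agent in $A$ envies any other agent in $A$ (and symmetrically for $B$), and IR of each agent is unaffected by removing others. Budget respect also restricts trivially: each agent's payment is unchanged, so it remains bounded by $\budget$. Feasibility of the restricted allocation follows from the downward-closedness of $\feasibles$ (zeroing out a coordinate preserves feasibility).

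The second step is the accounting: since the restricted outcome $(\allocs^A,\payments^A)$ is a feasible competitor in the optimization defining $\EFO(\vals^A,\budget)$, its revenue is a lower bound on $\EFO(\vals^A,\budget)$, and likewise for $B$. Adding these two inequalities and using $\sum_i \paymenti = \sum_{i\in A}\paymenti + \sum_{i\in B}\paymenti$ yields the claimed subadditivity.

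The only place where budgets could have caused trouble is the budget-respect check after restriction, but because payments are pointwise preserved, the constraint $\paymenti \leq \budget$ is inherited for free; this is why the \citet{HY-11} argument carries over verbatim. I do not anticipate any real obstacle here — the lemma is essentially bookkeeping once one confirms that EF, IR, BR, and feasibility are all preserved under coordinate-wise restriction, which follows from symmetry of the definitions of these properties and from downward-closedness of $\feasibles$.
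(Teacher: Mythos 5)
Your proof is correct and follows essentially the same route as the paper: take the optimal envy-free outcome on the full instance, restrict it coordinate-wise to each part of the partition, observe that EF, IR, BR, and feasibility (via downward-closedness) are all inherited, and conclude by comparing to the optima on the restricted instances. The paper states this in three sentences without spelling out the preservation checks; you fill in those details, but the underlying argument is the same.
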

\begin{proof}
Consider a partition of the original set of agents $N$ into $S$ and
$M$.  One way to obtain an envy-free outcome for $S$ (resp.~$M$) is to
calculate an envy-free outcome for the full set of agents $N$, and
then ignore the agents not in $S$ (resp.~$M$).  The optimal envy-free
outcome for $S$ (resp.~$M$) is certainly no wose.  Therefore, the
combined optimal envy-free revenue for $S$ and $M$ individually is at
least that of the full set $N$.
\end{proof}

\end{document}